\documentclass[a4paper,12pt]{article}
\pdfoutput=1

\usepackage{test}

\usepackage{hyperref}
\hypersetup{colorlinks,citecolor=blue,urlcolor=magenta}
\usepackage{doi}

\usepackage[style=ext-authoryear-comp,
sorting=nyt,
dashed=false, 
maxcitenames=2, 
maxbibnames=99, 
uniquelist=false,
uniquename=false,
giveninits=true, 
natbib, 
date=year 
]{biblatex}

\AtBeginRefsection{\GenRefcontextData{sorting=ynt}}
\AtEveryCite{\localrefcontext[sorting=ynt]}

\DeclareFieldFormat{pages}{#1} 
\renewbibmacro{in:}{\ifentrytype{article}{}{\printtext{\bibstring{in}\intitlepunct}}} 

\DeclareFieldFormat[article,inbook,incollection,inproceedings,patent,thesis,unpublished]{titlecase:title}{\MakeSentenceCase*{#1}} 

\addbibresource{localbib.bib}

\usepackage[inline,shortlabels]{enumitem}
\setlist[enumerate,1]{label=(\roman*)}

\usepackage[margin = 1.25in]{geometry}
\usepackage{setspace}
\onehalfspacing

\title{Equilibrium Selection in Pure Bubble Models by Dividend Injection}
\author{Tomohiro Hirano\thanks{Department of Economics, Royal Holloway, University of London, and Research Associate at the Center for Macroeconomics at the London School of Economics,  \href{mailto:tomohiro.hirano@rhul.ac.uk}{tomohiro.hirano@rhul.ac.uk}.} \and Alexis Akira Toda\thanks{Department of Economics, Emory University, \href{mailto:alexis.akira.toda@emory.edu}{alexis.akira.toda@emory.edu}.}}

\numberwithin{equation}{section}
\numberwithin{prop}{section}

\begin{document}

\maketitle

\begin{abstract}

Rational pure bubble models feature multiple (and often a continuum of) equilibria, which makes model predictions and policy analyses non-robust. We show that when the interest rate in the fundamental equilibrium is below the economic growth rate ($R<G$), a bubbly equilibrium with $R=G$ exists. By injecting dividends to the bubble asset that grow slower than the aggregate economy, we can eliminate the fundamental steady state and resolve equilibrium indeterminacy. We show the general applicability of dividend injection through examples in overlapping generations and infinite-horizon models with or without production or financial frictions.

\medskip

\textbf{Keywords:} bubble, dividend, equilibrium indeterminacy, growth, low interest rate, necessity.

\medskip

\textbf{JEL codes:} D53, E44, G12.
\end{abstract}

\section{Introduction}

An asset price bubble is a situation in which the asset price ($P$) exceeds its fundamental value ($V$) defined by the present value of dividends ($D$). In the so-called ``rational bubble'' model, the asset price exceeds its fundamental value as an equilibrium outcome, even if agents hold rational expectations and common knowledge about the asset. The literature on rational asset price bubbles has almost exclusively focused on a special case called ``pure bubbles'', namely assets that pay no dividends ($D=0$), which are intrinsically worthless like fiat money. Because the fundamental value of a pure bubble asset is $V=0$, we say that an asset price bubble exists whenever $P>0$ in equilibrium. Although pure bubble models illustrate some important aspects of asset price bubbles, one of their shortcomings is that pure bubble models often feature multiple equilibria.\footnote{See \citet[\S4.7]{HiranoToda2024JME} for a more extensive discussion of shortcomings of pure bubble models.} In these models, there often exist a fundamental equilibrium in which the price of the pure bubble asset is zero, a bubbly steady state in which the asset price is a positive constant, as well as a continuum of bubbly equilibria converging to the fundamental steady state. Therefore it is not obvious which equilibrium we should select. 

Although the equilibrium indeterminacy in pure bubble models has been recognized for decades \citep{Gale1973},\footnote{In a typical pure bubble model such as \citet{Tirole1985}, only when the initial bubble price just equals the price that corresponds to a saddle path (which corresponds to the largest sustainable bubble size), the economy will converge to a steady state with positive bubbles. Otherwise, the bubble price will converge to zero, and there are a continuum of such asymptotically bubbleless paths. See \citet{HiranoToda2024EL} as well as \S\ref{sec:example} below for a formal analysis.} the literature has selected only one of a continuum of bubbly equilibria (a saddle path or a steady state) and has advanced policy and quantitative analyses. However, the equilibrium selected is only one point within an open set and thus has measure zero. There is no basis for this equilibrium selection and there is no reason for why heterogeneous economic agents coordinate on that equilibrium even if they are exposed to disturbances, including policy changes. In short, equilibrium indeterminacy makes model implications and predictions fragile and non-robust.

In this paper, we propose a solution to this indeterminacy problem in rational bubble models. We resolve the equilibrium indeterminacy by injecting small dividends to the bubble asset. Using a reduced-form general equilibrium model, when the interest rate in the fundamental equilibrium is below the economic growth rate ($R<G$), we show that a bubbly equilibrium with $R=G$ exists. In this case we can eliminate the fundamental steady state and resolve indeterminacy by introducing a dividend-paying asset with dividend growth rate lower than the economy ($G_d<G$) but higher than the interest rate ($G_d>R$). The idea is based on the bubble necessity result of \citet{HiranoToda2025JPE}, which goes as follows. When $G_d<G$, the dividends from the asset are asymptotically negligible, so the asset behaves like a pure bubble asset. If the asset price reflects fundamentals, its price must grow at the same rate as dividends, $G_d<G$, and hence must be negligible in the long run relative to the economy. But the condition $G_d>R$ implies that the fundamental value (the present value of dividends) becomes infinite, which is obviously impossible in equilibrium. Thus asset price bubbles become necessary for the existence of equilibrium by raising the interest rate above the dividend growth rate. Our dividend injection procedure not only shows the necessity of asset price bubbles but also eliminates all equilibria converging to the fundamental steady state. Furthermore, we show that under plausible conditions on the elasticity of substitution (between goods or production factors), there exists a locally determinate bubbly equilibrium even with slightly positive dividends. These results imply that there is a discontinuity in equilibrium determinacy between the case with $D=0$ and the case with $D>0$.

Our equilibrium selection by dividend injection borrows the idea of ``commodity money refinement'' in the monetary theory literature, but there are conceptual differences. For simplicity, consider a model with no growth. The idea of commodity money refinement is to add small dividends to money (the pure bubble asset), study the equilibrium, and take the limit as dividends tend to zero, \ie, the pure bubble limit. However, there is a conceptual problem: by adding positive but constant dividends to the pure bubble asset as done in commodity money refinement in monetary theory, the asset price becomes \emph{equal} to its fundamental value and hence it is not a bubble. Only in the pure bubble limit that dividends vanish, the asset becomes a bubble. In other words, there is a discontinuity in the qualitative properties of asset prices at the limit. In contrast, we suggest adding dividends that grow at a slower rate than the economy ($G_d<G$, with $G=1$ in this example). In this case, applying the concept of necessity of asset price bubbles established in  \citet{HiranoToda2025JPE}, we may show that the asset price contains a bubble (exceeds its fundamental value) regardless of taking the limit or not. Although the equilibrium ultimately selected is the same with commodity money refinement or dividend injection, our approach may be conceptually appealing because we resolve the discontinuity in the qualitative properties of asset prices. Furthermore, our approach is also suited for applications because when thinking about real assets yielding dividends such as stocks, land, and housing, considering the case without taking the pure bubble limit would have a wide range of applications including empirical analysis.

Our equilibrium selection by dividend injection places the existing literature of pure bubbles on firmer footing. However, there are many substantial benefits in establishing the necessity and local uniqueness of the bubbly equilibrium beyond mere theoretical satisfaction.
\begin{enumerate*}
    \item First, it enables comparative dynamics, making model predictions robust for policy and quantitative analyses. For instance, in the applied theory literature, it is common to consider the so-called ``stochastic bubble'' model in which the pure bubble asset permanently loses value with some probability. In \S\ref{subsec:application_stochastic} we show that stochastic bubbles are ruled out by our equilibrium selection, which implies that stochastic bubble models may be less plausible.
    \item Second, unlike pure bubble assets with no dividends, our idea of dividend injection is also a model of bubbles attached to assets yielding positive dividends. Hence, this approach would be more natural for considering realistic bubbles attached to land, housing, and stocks.
    \item Third, our approach has a potential for extensions and applications. In game theory, \citet{CarlssonVanDamme1993} show that adding a small noise in $2\times 2$ games with incomplete information removes equilibrium multiplicity. \citet{MorrisShin1998} apply this idea to select a unique equilibrium in a model of currency attacks with a continuum of traders. The selection of a unique equilibrium outcome by adding small noises has opened up enormous possibilities for applications  \citep{MorrisShin2000,MorrisShin2003}. From this perspective, we can expect our approach to produce various applications in macroeconomic analysis with asset price bubbles. For instance, \citet*{BernankeGertlerGilchrist1998} developed the quantitative macroeconomic framework called ``BGG'', which pioneered the large literature studying the link between asset prices and the real economy. Our approach can be embedded into their framework or other quantitative macroeconomic models so that we can conduct various policy and quantitative analysis in a way that asset prices contain a bubble. This direction is of considerable importance because policymakers want to understand leaning against the bubble policy \citep{Barlevy2018}. Our approach provides a theoretical foundation for the direction.\footnote{Note that \citet{BernankeGertler1999} extended \citet{BernankeGertlerGilchrist1998} to add exogenous asset price bubbles. In contrast, in our approach, asset bubbles occur as the equilibrium outcome within a general equilibrium framework.}
\end{enumerate*}

The rest of the paper is organized as follows. After discussing the literature, \S\ref{sec:example} presents a motivating example to select a unique equilibrium in a canonical pure bubble model. \S\ref{sec:inject} discusses the elimination of fundamental equilibria by dividend injection in a general setting and applies it to overlapping generations and infinite-horizon models with or without production. To illustrate the role of financial frictions, \S\ref{sec:lev} presents an application to an entrepreneurial economy with leverage. \S\ref{sec:application} provides further applications to typical pure bubble models in the literature. Most proofs are relegated to Appendix \ref{sec:proof}.

\subsection{Related literature}

Our idea of introducing small positive dividends to select an equilibrium is conceptually analogous to the equilibrium refinement literature in game theory (see \citet{vanDamme1987} for a review) and monetary theory. Regarding game theory, in both the perfect equilibrium of \citet{Selten1975} and the proper equilibrium of \citet{Myerson1978}, players assign infinitesimally small but positive probabilities to all pure strategies that are not best responses. In monetary theory, to rule out hyperinflationary equilibria in OLG models with money, \citet{Scheinkman1978} and \citet{BrockScheinkman1980} discuss the possibility of introducing small dividends, mandatory savings (social security), or utility from holding money. This approach of introducing small dividends is nowadays commonly referred to as ``commodity money refinement'' and is widely applied.\footnote{Examples include \citet{Wallace1981}, \citet{ObstfeldRogoff1983}, \citet{Nicolini1996}, \citet{WallaceZhu2004}, and \citet{ChoiRocheteau2021}, among others. We thank Guillaume Rocheteau, Pierre-Olivier Weill, Randall Wright, and Tao Zhu for teaching us the history of commodity money refinement.}

Our paper belongs to the classical rational bubble literature pioneered by \citet{Samuelson1958}, \citet{Bewley1980}, \citet{Tirole1985}, \citet{Kocherlakota1992}, and \citet{SantosWoodford1997}, which studies bubbles as speculation backed by nothing.\footnote{See \citet{HiranoTodaClarification} for the precise mathematical definition and economic meaning of rational bubbles.} As discussed in the recent review article of \citet{HiranoToda2024JME}, the rational bubble literature almost exclusively focuses on bubbles attached to intrinsically worthless assets ($D=0$). These are precisely the models that suffer from equilibrium indeterminacy. To the best of our knowledge, \citet[\S7]{Wilson1981} is the first paper that establishes the existence of rational bubbles in dividend-paying assets ($D>0$).\footnote{See \citet*[\S6.1.2]{LeVanPham2016}, \citet*[Example 2]{BosiHa-HuyLeVanPhamPham2018}, and \citet*[Proposition 7]{BosiLeVanPham2022} for other examples.} \citet[\S6]{HiranoToda2024JME}, and \citet{HiranoToda2025JPE,HiranoTodaHousingbubble,HiranoTodaUnbalanced} generalize \citet{Wilson1981}'s result in workhorse macroeconomic models. They derive a conceptually new idea of the necessity of asset price bubbles and a new insight that asset pricing implications under balanced growth and unbalanced growth are markedly different. \citet{Tirole1985} recognized the possibility of the nonexistence of fundamental equilibria. However, he did not necessarily provide a formal proof nor investigate the dynamic stability of the bubbly equilibrium (see \citet[\S5.2]{HiranoToda2024JME} and \citet[\S V.A]{HiranoToda2025JPE} for details).

\section{Motivating example}\label{sec:example}

This section presents a motivating example to select a unique equilibrium in a canonical pure bubble model. The analysis presented here is an extension of an example given in \citet[\S3.1]{HiranoToda2024JME}.

We consider a simple overlapping generations (OLG) model with an intrinsically worthless asset as in \citet{Samuelson1958}. Time is discrete and indexed by $t=0,1,\dotsc$. Agents born at time $t$ live for two periods and have utility function
\begin{equation}
    \log c_t^y+\beta \log c_{t+1}^o, \label{eq:utility}
\end{equation}
where $\beta>0$ governs time preference and $c_t^y,c_{t+1}^o$ denote the consumption when young and old. The initial old care only about their consumption. The time $t$ endowments of the young and old are $(e_t^y,e_t^o)=(aG^t,bG^t)$, where $G>0$ is the gross growth rate of the economy and $a,b>0$. There is a unit supply of an intrinsically worthless asset (pure bubble asset) with no dividends, which is initially held by the old.

Let us find all the perfect foresight equilibria of this economy. Let $P_t\ge 0$ be the price of the asset in units of time $t$ consumption and $R_t>0$ be the gross risk-free rate between time $t$ and $t+1$. Since agents live only for two periods and the asset is initially held by old agents, in equilibrium it is clear that the old sell the entire asset to the young. Therefore the time $t$ budget constraints imply
\begin{align*}
    &\text{Old:} && c_t^o+P_t\cdot 0=P_t\cdot 1 + e_t^o\iff c_t^o=bG^t+P_t,\\
    &\text{Young:} && c_t^y+P_t\cdot 1=P_t\cdot 0 + e_t^y\iff c_t^y=aG^t-P_t.
\end{align*}
To support this allocation as an equilibrium, it remains to verify the Euler equation (first-order condition) of the young, which is
\begin{equation}
    1=\beta R_t(c_{t+1}^o/c_t^y)^{-1}=\beta R_t\frac{aG^t-P_t}{bG^{t+1}+P_{t+1}}. \label{eq:euler_young}
\end{equation}
If the asset price equals its fundamental value, then $P_t=0$, the equilibrium allocation is autarkic, and \eqref{eq:euler_young} implies the interest rate
\begin{equation}
    R_t=R=\frac{bG}{\beta a}. \label{eq:R_autarky}
\end{equation}
Next suppose that there is an asset price bubble, so $P_t>0$. Then the absence of arbitrage implies $R_t=P_{t+1}/P_t$, so \eqref{eq:euler_young} implies
\begin{equation}
    1=\beta \frac{P_{t+1}}{P_t}\frac{aG^t-P_t}{bG^{t+1}+P_{t+1}}\iff P_{t+1}=\frac{bG^{t+1}P_t}{\beta aG^t-(1+\beta)P_t}. \label{eq:Pt_diff}
\end{equation}
To solve the difference equation \eqref{eq:Pt_diff}, let us introduce the detrended asset price $p_t\coloneqq G^{-t}P_t$. Then \eqref{eq:Pt_diff} becomes
\begin{equation}
    p_{t+1}=\frac{bp_t}{\beta a-(1+\beta)p_t}\iff \frac{1}{p_{t+1}}=\frac{\beta a}{b}\frac{1}{p_t}-\frac{1+\beta}{b}. \label{eq:xt_diff}
\end{equation}
Since \eqref{eq:xt_diff} is a linear difference equation in $1/p_t$, it is straightforward to solve. The general solution is
\begin{equation}
    \frac{1}{p_t}=\begin{cases*}
        \left(\frac{\beta a}{b}\right)^t\left(\frac{1}{p_0}-\frac{1+\beta}{\beta a-b}\right)+\frac{1+\beta}{\beta a-b} & if $\beta a\neq b$,\\
        \frac{1}{p_0}-\frac{1+\beta}{b}t & if $\beta a=b$.
    \end{cases*}\label{eq:pt}
\end{equation}
Since $P_t>0$ if and only if $p_t>0$, a solution with $P_t>0$ for all $t$ is possible only if $\beta a>b$. Under this condition, noting that $p_0=P_0$, the general solution is
\begin{equation}
    P_t=\frac{G^t}{\left(\frac{\beta a}{b}\right)^t\left(\frac{1}{P_0}-\frac{1+\beta}{\beta a-b}\right)+\frac{1+\beta}{\beta a-b}},\label{eq:Pt}
\end{equation}
where $0<P_0\le \frac{\beta a-b}{1+\beta}$ is arbitrary so that $P_t\ge 0$ for all $t$. Using \eqref{eq:Pt}, the consumption of the young is
\begin{equation*}
    c_t^y=aG^t-P_t=G^t\frac{a\left(\frac{1}{P_0}-\frac{1+\beta}{\beta a-b}\right)+\frac{a+b}{\beta a-b}\left(\frac{b}{\beta a}\right)^t}{\frac{1}{P_0}-\frac{1+\beta}{\beta a-b}+\frac{1+\beta}{\beta a-b}\left(\frac{b}{\beta a}\right)^t}>0,
\end{equation*}
so consumption is interior and we indeed have an equilibrium. We can summarize the above derivations in the following proposition.

\begin{prop}\label{prop:example_eq}
The following statements are true.
\begin{enumerate}
\item If $\beta a\le b$, the unique equilibrium is $P_t=0$ and $(c_t^y,c_t^o)=(aG^t,bG^t)$.
\item If $\beta a>b$, there are a continuum of equilibria parametrized by $0\le P_0\le \frac{\beta a-b}{1+\beta}$, where $P_t$ is given by \eqref{eq:Pt} and $(c_t^y,c_t^o)=(aG^t-P_t,bG^t+P_t)$.\footnote{We always use the convention $1/0=\infty$ and $1/\infty=0$.}
\item The asymptotic behavior of the detrended variables is
\begin{equation*}
    \lim_{t\to\infty} G^{-t}(c_t^y,c_t^o,P_t)=\begin{cases*}
        \left(\frac{a+b}{1+\beta},\frac{\beta(a+b)}{1+\beta},\frac{\beta a-b}{1+\beta}\right) & if $\beta a>b$ and $P_0=\frac{\beta a-b}{1+\beta}$,\\
        (a,b,0) & otherwise.
    \end{cases*}
\end{equation*}
\end{enumerate}
\end{prop}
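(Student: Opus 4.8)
The plan is to verify each of the three statements by exploiting the explicit solution~\eqref{eq:pt} already derived in the excerpt, so most of the work reduces to reading off asymptotics and checking feasibility constraints. First I would dispose of statement~(i): when $\beta a \le b$, I claim no bubbly equilibrium exists. The cleanest route is to argue by contradiction from the difference equation~\eqref{eq:xt_diff}. If $\beta a = b$, the second branch of~\eqref{eq:pt} shows $1/p_t$ decreases linearly without bound, so $1/p_t$ eventually becomes negative, forcing $p_t<0$ and hence $P_t<0$, which is infeasible. If $\beta a < b$, then $\beta a/b < 1$, so in the first branch of~\eqref{eq:pt} the geometric factor $(\beta a/b)^t \to 0$ while the constant term $\frac{1+\beta}{\beta a - b}$ is negative; thus $1/p_t$ converges to a negative limit and again becomes negative in finite time. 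In either subcase a positive price path cannot be sustained for all $t$, so the only equilibrium is the autarkic one with $P_t=0$ and the stated consumption. I would note that $P_t=0$ is always an equilibrium because it satisfies~\eqref{eq:euler_young} via the interest rate~\eqref{eq:R_autarky}.

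Next, for statement~(ii), the existence and parametrization of a continuum of equilibria when $\beta a > b$ is essentially already established in the derivation preceding the proposition: formula~\eqref{eq:Pt} gives the general positive solution, and the consumption computation shows $c_t^y>0$. The remaining task is to pin down the admissible range of the free parameter $P_0$. The upper bound $P_0 \le \frac{\beta a - b}{1+\beta}$ is exactly the condition that keeps $1/p_t$ positive (equivalently $P_t \ge 0$) for all $t$: I would show that the bracketed term $\frac{1}{P_0} - \frac{1+\beta}{\beta a - b}$ must be nonnegative, since otherwise the first branch of~\eqref{eq:pt}, with the now-growing factor $(\beta a/b)^t$ (because $\beta a > b$), would drive $1/p_t$ to $-\infty$ and violate positivity. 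The lower endpoint $P_0=0$ recovers the fundamental equilibrium, so the two cases glue together into the stated closed interval $[0, \frac{\beta a - b}{1+\beta}]$.

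Finally, statement~(iii) is a limit computation. I would detrend by writing everything in terms of $p_t = G^{-t}P_t$ and taking $t\to\infty$ in~\eqref{eq:Pt}. When $P_0 = \frac{\beta a - b}{1+\beta}$ the bracketed term vanishes, so $1/p_t$ is constant and $p_t \to \frac{\beta a - b}{1+\beta}$, giving the nontrivial bubbly steady state; substituting into $c_t^y = aG^t - P_t$ and $c_t^o = bG^t + P_t$ and detrending yields the consumption limits $\frac{a+b}{1+\beta}$ and $\frac{\beta(a+b)}{1+\beta}$. For any $P_0 < \frac{\beta a - b}{1+\beta}$ (including the fundamental case), the bracketed term is strictly positive and the factor $(\beta a/b)^t \to \infty$, so $1/p_t \to \infty$, hence $p_t \to 0$ and the detrended price vanishes, leaving the autarkic limits $(a,b,0)$.

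The calculations here are all routine algebra; the only point demanding genuine care is the feasibility analysis in statements~(i) and~(ii), namely tracking the sign of $1/p_t$ as a function of the regime ($\beta a \gtrless b$) and the initial condition $P_0$. The main obstacle, such as it is, lies in being exhaustive about the sign of the constant term $\frac{1+\beta}{\beta a - b}$ and the growth behavior of $(\beta a/b)^t$ across the three cases, so that one correctly identifies precisely when a positive price path survives for \emph{all} $t$ rather than merely for finitely many periods.
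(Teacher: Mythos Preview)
Your proposal is correct and follows essentially the same route as the paper: the proposition is stated as a summary of the derivations already given in the text, and your plan---reading off feasibility of positive price paths and the asymptotics directly from the explicit solution~\eqref{eq:pt}/\eqref{eq:Pt}---is precisely that argument made explicit, with the same case split on the sign of $\beta a-b$ and the same identification of the admissible initial conditions $P_0\in[0,\frac{\beta a-b}{1+\beta}]$.
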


Proposition \ref{prop:example_eq} shows that, when $\beta a>b$, we are in an uncomfortable situation where there are a continuum of equilibria. Furthermore, each equilibrium has a different consumption allocation (the economy features real equilibrium indeterminacy) because $P_t$ is strictly increasing in $P_0$ by \eqref{eq:Pt} and the consumption of the young is $aG^t-P_t$. Thus it is not obvious which equilibrium we should select.

One possibility for equilibrium selection is to look for equilibria that are more efficient. The following proposition shows that all equilibria can be Pareto ranked.

\begin{prop}\label{prop:Pareto_rank}
Suppose $\beta a>b$ and let $E(P_0)$ be the equilibrium with initial asset price $P_0\in \left[0,\frac{\beta a-b}{1+\beta}\right]$. If $P_0<P_0'$, then $E(P_0')$ strictly Pareto dominates $E(P_0)$. Furthermore, $E\left(\frac{\beta a-b}{1+\beta}\right)$ is Pareto efficient.
\end{prop}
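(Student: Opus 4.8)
The plan is to exploit the fact that, by the budget constraints, aggregate consumption $c_t^y+c_t^o=(a+b)G^t$ is the \emph{same} in every equilibrium $E(P_0)$; raising $P_0$ merely transfers resources from the young to the old in each period, since by \eqref{eq:Pt} the denominator is strictly decreasing in $P_0$ and hence $P_t$ is strictly increasing in $P_0$. The Pareto ranking therefore hinges on whether this intertemporal redistribution benefits each generation on net, and Pareto efficiency of the maximal bubble amounts to the statement that no feasible reshuffling of the fixed aggregate can help everyone.

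For the first claim I would differentiate the lifetime utility $U_t(P_0)=\log(aG^t-P_t)+\beta\log(bG^{t+1}+P_{t+1})$ of the generation born at $t$ with respect to $P_0$, treating $P_t=P_t(P_0)$ via \eqref{eq:Pt}. Writing $P_t'=\mathrm{d}P_t/\mathrm{d}P_0$ and substituting the young's Euler equation \eqref{eq:euler_young} in the form $\frac{P_t}{aG^t-P_t}=\beta\frac{P_{t+1}}{bG^{t+1}+P_{t+1}}$ collapses the derivative to
\[
\frac{\mathrm{d}U_t}{\mathrm{d}P_0}=\frac{P_t}{c_t^y}\left(\frac{P_{t+1}'}{P_{t+1}}-\frac{P_t'}{P_t}\right)=\frac{P_t}{c_t^y}\,\frac{\mathrm{d}}{\mathrm{d}P_0}\log\frac{P_{t+1}}{P_t}.
\]
Since $P_{t+1}/P_t=R_t$ is the gross interest rate, it remains to show $R_t$ is strictly increasing in $P_0$. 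Using the closed form \eqref{eq:pt}, I would write $R_t=G(\rho^t x+c)/(\rho^{t+1}x+c)$ with $\rho=\beta a/b>1$, $c=(1+\beta)/(\beta a-b)>0$, and $x=1/P_0-c$ strictly decreasing in $P_0$; a one-line computation gives $\partial R_t/\partial x=Gc\rho^t(1-\rho)/(\rho^{t+1}x+c)^2<0$, so $R_t$ is strictly increasing in $P_0$ and hence $\mathrm{d}U_t/\mathrm{d}P_0>0$ on the interior. The initial old consume $b+P_0$, which is trivially increasing, and the boundary cases ($P_0=0$ or $P_0'=\frac{\beta a-b}{1+\beta}$) follow from continuity of each $U_t$ on the closed interval. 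This yields strict Pareto domination.

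For the second claim I would show that $E\!\left(\frac{\beta a-b}{1+\beta}\right)$, the golden-rule allocation $\bar c_t^y=\frac{a+b}{1+\beta}G^t$, $\bar c_t^o=\frac{\beta(a+b)}{1+\beta}G^t$ with $R_t\equiv G$, is efficient. Suppose a feasible allocation $(\tilde c_t^y,\tilde c_t^o)$ weakly Pareto dominates it. Detrending the excess young consumption by $x_s\coloneqq\frac{1+\beta}{a+b}G^{-s}(\tilde c_s^y-\bar c_s^y)$ and using feasibility $\tilde c_s^o=(a+b)G^s-\tilde c_s^y$ (so $\tilde c_s^y=\bar c_s^y(1+x_s)$ and $\tilde c_s^o=\bar c_s^o(1-x_s/\beta)$), each generation's weak-domination inequality becomes $\log(1+x_s)+\beta\log(1-x_{s+1}/\beta)\ge 0$, which rearranges to $x_{s+1}\le\phi(x_s)\coloneqq\beta\bigl[1-(1+x_s)^{-1/\beta}\bigr]$, while the initial old's inequality forces $x_0\le 0$. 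The function $\phi$ is strictly concave with $\phi(0)=0$ and $\phi'(0)=1$, so $\phi(x)\le x$ with equality only at $x=0$. If the initial old were strictly better off then $x_0<0$, and the recursion makes $x_s$ strictly decreasing; taking limits in $x_{s+1}\le\phi(x_s)$ rules out any interior limit $L>-1$ (it would force $\phi(L)=L$, i.e.\ $L=0$, contradicting $L\le x_0<0$), so $x_s\downarrow-1$, whence $\log(1+x_s)\to-\infty$ and the weak-domination inequality fails for large $s$ — a contradiction. Hence $x_0=0$, and iterating the same argument from each date gives $x_s\equiv 0$, so the dominating allocation coincides with the golden rule and efficiency follows.

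The main obstacle is the efficiency claim. A naive tangent-line (first-order) bound only shows that the detrended transfers $x_s$ are monotone, which is perfectly consistent with a Pareto improvement and therefore proves nothing; the crux is to retain the \emph{exact} nonlinear relation and exploit the strict concavity of $\phi$ (equivalently, the curvature of log utility) to show that fully compensating each successive generation requires transfers growing strictly faster than $G$, eventually driving young consumption to zero and breaking weak domination. Pinning down the direction of the inequality in the recursion and executing the fixed-point/limit argument cleanly is where I expect the care to be needed.
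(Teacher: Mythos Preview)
Your proposal is correct. For the Pareto ranking, the paper takes a marginally shorter route: it uses the difference equation \eqref{eq:xt_diff} to eliminate $p_{t+1}$ and write generation $t$'s utility as a function of the single variable $p_t$, namely $(1+\beta)\log(a-p_t)-\beta\log(\beta a-(1+\beta)p_t)$ up to a constant, then differentiates once and invokes monotonicity of $p_t$ in $P_0$ from \eqref{eq:pt}. Your version keeps both $P_t$ and $P_{t+1}$ in play, uses the Euler equation to collapse the derivative, and reduces to monotonicity of $R_t$ in $P_0$; this is essentially the same calculation reorganized, and the closed form does the real work in both.

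For Pareto efficiency of the golden-rule allocation the approaches genuinely differ. The paper simply notes that $R=G$ and cites the Balasko--Shell characterization after detrending, whereas you give a self-contained argument via the strictly concave recursion $x_{s+1}\le\phi(x_s)$ with $\phi(0)=0$, $\phi'(0)=1$, showing that any nontrivial transfer must drive detrended young consumption to zero and violate weak domination for some late generation. Your route costs a little more work but avoids an external reference and makes the mechanism (curvature of log utility forcing compensating transfers to grow faster than $G$) explicit; the paper's route is a one-line citation that black-boxes this.
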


By Proposition \ref{prop:Pareto_rank}, the only equilibrium that is not Pareto dominated (and is itself efficient) is the one corresponding to $P_0=\frac{\beta a-b}{1+\beta}$ and hence
\begin{equation}
    (c_t^y,c_t^o,P_t)=\left(\frac{a+b}{1+\beta}G^t,\frac{\beta(a+b)}{1+\beta}G^t,\frac{\beta a-b}{1+\beta}G^t\right). \label{eq:Samuelson_BG}
\end{equation}
Hence one may argue that $E\left(\frac{\beta a-b}{1+\beta}\right)$ is the most natural equilibrium because it is the most efficient. However, this equilibrium (and any other with $P_0>0$) exhibits an asset price bubble. Another natural equilibrium may be $E(0)$, in which there is no asset price bubble, but it is the most inefficient equilibrium. In any case, using Pareto efficiency as an equilibrium selection criterion is not convincing because in incomplete-market models like \S\ref{sec:lev}, all equilibria are inefficient.

We now argue that a slight perturbation to the economy leads to a unique equilibrium, and only one of the equilibria of the original economy can be achieved as the perturbation becomes zero, which enables us to select a unique equilibrium. The idea is to ``inject'' a small dividend to the asset. Instead of a pure bubble asset, suppose that the asset pays dividend $D_t=D_0G_d^t$, where the initial dividend $D_0>0$ is sufficiently small and the dividend growth rate satisfies $G_d<G$ so that the dividend is asymptotically negligible relative to endowments. Combining the Euler equation \eqref{eq:euler_young} (including the dividend) and the no-arbitrage condition
\begin{equation}
    R_t=\frac{P_{t+1}+D_{t+1}}{P_t}, \label{eq:noarbitrage}
\end{equation}
the equilibrium condition becomes
\begin{align}
    &1=\beta \frac{P_{t+1}+D_{t+1}}{P_t}\frac{aG^t-P_t}{bG^{t+1}+P_{t+1}+D_{t+1}}\notag \\
    \iff &P_{t+1}=\frac{bG^{t+1}P_t}{\beta aG^t-(1+\beta)P_t}-D_{t+1}. \label{eq:PtD_diff}
\end{align}
Although it is not generally possible to solve the difference equation \eqref{eq:PtD_diff} explicitly since it is nonlinear and non-autonomous (the equation explicitly depends on time), we may study its qualitative property by applying results from the theory of dynamical systems. To this end, introduce the auxiliary variables $\xi_{1t}\coloneqq G^{-t}P_t$ and $\xi_{2t}=(G_d/G)^t D_0$. Then the one-dimensional non-autonomous nonlinear difference equation \eqref{eq:PtD_diff} can be converted to the two-dimensional autonomous nonlinear difference equation
\begin{subequations}\label{eq:xt_autonomous}
\begin{align}
    \xi_{1,t+1}&=\frac{b\xi_{1t}}{\beta a-(1+\beta)\xi_{1t}}-\frac{G_d}{G}\xi_{2t}, \label{eq:xt_autonomous1}\\
    \xi_{2,t+1}&=\frac{G_d}{G}\xi_{2t}. \label{eq:xt_autonomous2}
\end{align}
\end{subequations}
Let us write \eqref{eq:xt_autonomous} as $\xi_{t+1}=h(\xi_t)$, where $\xi_t=(\xi_{1t},\xi_{2t})$ and
\begin{equation}
    h(\xi_1,\xi_2)=\begin{bmatrix}
        \frac{b\xi_1}{\beta a-(1+\beta)\xi_1}-\frac{G_d}{G}\xi_2\\
        \frac{G_d}{G}\xi_2
    \end{bmatrix}. \label{eq:h}
\end{equation}
Solving $\xi=h(\xi)$, since $G_d<G$ we find that $h$ has two fixed points
\begin{equation}
    \xi_f^*\coloneqq (0,0)\quad \text{and} \quad \xi_b^*\coloneqq \left(\frac{\beta a-b}{1+\beta},0\right), \label{eq:steady_states}
\end{equation}
where the subscripts $f$ and $b$ refer to ``fundamental'' and ``bubbly''.

The following proposition shows that, by appropriately choosing the dividend growth rate $G_d$, we can eliminate the fundamental steady state. Furthermore, the bubbly steady state is locally determinate in the sense that there exists a unique equilibrium path converging to the bubbly steady state.

\begin{prop}\label{prop:determinacy_Samuelson}
Suppose $\beta a>b$ and $G_d\in \left(\frac{bG}{\beta a},G\right)$. Then the following statements are true.
\begin{enumerate}
    \item\label{item:determinacy_Samuelson1} There exist no equilibrium paths converging to the fundamental steady state $\xi_f^*$.
    \item\label{item:determinacy_Samuelson2} For any initial dividend $D_0>0$, there exists a unique equilibrium path $\set{\xi_t}_{t=0}^\infty$ converging to the bubbly steady state $\xi_b^*$.
\end{enumerate}
\end{prop}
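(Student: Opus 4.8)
The plan is to treat \eqref{eq:xt_autonomous} as a planar dynamical system and classify its two fixed points \eqref{eq:steady_states} by linearization. Since $h$ in \eqref{eq:h} decouples in its second coordinate, its Jacobian
\[
Dh(\xi_1,\xi_2)=\begin{bmatrix} \dfrac{b\beta a}{(\beta a-(1+\beta)\xi_1)^2} & -\dfrac{G_d}{G}\\[2mm] 0 & \dfrac{G_d}{G}\end{bmatrix}
\]
is upper triangular, so its eigenvalues are the diagonal entries. Writing $\lambda\coloneqq b/(\beta a)$, $\mu\coloneqq G_d/G$, and $\gamma\coloneqq\beta a/b$, the eigenvalues are $(\lambda,\mu)$ at $\xi_f^*$ and $(\gamma,\mu)$ at $\xi_b^*$. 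The hypotheses $\beta a>b$ and $G_d\in(bG/(\beta a),G)$ — the left endpoint being exactly the fundamental rate $R$ of \eqref{eq:R_autarky} — collapse to the single chain $\lambda<\mu<1<\gamma$, so $\xi_f^*$ is a stable node and $\xi_b^*$ a saddle. The one decisive fact is $\lambda<\mu$, i.e.\ $G_d>R$: the injected dividend decays strictly slower than the rate at which the unforced map contracts toward the fundamental state. Throughout I use that $\xi_{2t}=\mu^t D_0$ is given explicitly, so an equilibrium reduces to a nonnegative orbit of the scalar nonautonomous recursion $\xi_{1,t+1}=g(\xi_{1t})-\mu^{t+1}D_0$, where $g(\xi_1)\coloneqq b\xi_1/(\beta a-(1+\beta)\xi_1)$ is strictly increasing on $[0,M)$ with $M\coloneqq\beta a/(1+\beta)$, and feasibility of an equilibrium is exactly $\xi_{1t}\in(0,M)$ for all $t$.

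For statement \ref{item:determinacy_Samuelson1}, suppose an orbit converges to $\xi_f^*$, i.e.\ $\xi_{1t}\to0$. Using $g(\xi_1)=\lambda\xi_1+O(\xi_1^2)$, I would first establish the sharp decay rate: once $\abs{\xi_{1t}}$ is small, $\abs{\xi_{1,t+1}}\le\lambda'\abs{\xi_{1t}}+\mu^{t+1}D_0$ with $\lambda'<\mu$, which by summation gives $\xi_{1t}=O(\mu^t)$. Setting $v_t\coloneqq\xi_{1t}/\mu^t$ and dividing the recursion by $\mu^{t+1}$ then yields $v_{t+1}=(\lambda/\mu)v_t-D_0+o(1)$, a contraction with vanishing perturbation whose limit is $v_t\to-D_0\mu/(\mu-\lambda)<0$. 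Hence $\xi_{1t}<0$ for all large $t$, contradicting $P_t\ge0$; no nonnegative, hence no equilibrium, orbit converges to $\xi_f^*$. This is exactly the bubble-necessity mechanism: because $G_d>R$ the dividend term dominates the contraction and drives any candidate fundamental path negative.

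For \ref{item:determinacy_Samuelson2} I would run a shooting argument in $P_0=\xi_{10}$, driven by two monotonicity lemmas. First, comparison with the autonomous map: since the forcing is negative and $g$ is increasing, the forced orbit stays below the autonomous orbit started at the same point, so if $\xi_{1s}\le\xi_b$ for some $s$ the autonomous orbit falls to $0$, the forced orbit is squeezed to $0$, and by \ref{item:determinacy_Samuelson1} it goes negative; thus every feasible orbit satisfies $\xi_{1t}>\xi_b$ for all $t$. Second, an escape estimate: on any compact subinterval of $(\xi_b,M)$ one has $g(\xi_1)-\xi_1\ge c>0$, so for $t$ large (where $\mu^{t+1}D_0<c/2$) an orbit lying in $[\xi_b+\epsilon,M)$ increases by at least $c/2$ per step and exits through $M$ in finite time. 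Consequently a feasible orbit must satisfy $\xi_{1t}<\xi_b+\epsilon$ eventually for every $\epsilon>0$, i.e.\ every feasible orbit converges to $\xi_b^*$. Existence of at least one feasible orbit then follows from connectedness: the ``exit below'' set $\mathcal N$ (which contains $P_0=0$, since $\xi_{11}=-\mu D_0<0$) and the ``exit above'' set $\mathcal P$ (which contains $P_0$ near $M$, since $g(P_0)\to\infty$) are open, disjoint and nonempty, so they cannot cover the connected interval $[0,M)$, and any $P_0$ in the complement yields a feasible — hence, by the escape estimate, convergent — equilibrium.

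Uniqueness closes the argument and is where the saddle structure bites. If $P_0<P_0'$ both gave feasible orbits, both converge to $\xi_b^*$, and the difference $d_t\coloneqq\xi_{1t}(P_0')-\xi_{1t}(P_0)>0$ satisfies, by the mean value theorem and cancellation of the common forcing, $d_{t+1}=g'(\zeta_t)d_t$ with $\zeta_t\to\xi_b^*$; since $g'(\xi_b^*)=\gamma>1$, the factor $g'(\zeta_t)$ exceeds $1+\delta$ for large $t$, forcing $d_t\to\infty$ and contradicting $0<d_t<M$. Hence the equilibrium orbit is unique. I expect the main obstacle to lie in \ref{item:determinacy_Samuelson2}: making the escape/convergence estimate uniform in the face of the nonautonomous forcing and the blow-up of $g$ near $M$, and verifying openness of $\mathcal N,\mathcal P$ together with continuity of the finite-horizon flow (this shooting argument is the global counterpart of the local stable manifold of the saddle $\xi_b^*$, which is what delivers determinacy for \emph{every} $D_0>0$ rather than only small ones). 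The rate estimate $\xi_{1t}=O(\mu^t)$ in \ref{item:determinacy_Samuelson1} is the other step needing care, though it is routine once $\lambda<\mu$ is in hand.
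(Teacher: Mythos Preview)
Your argument is correct and takes a genuinely different route from the paper's. For \ref{item:determinacy_Samuelson1} the paper simply invokes the Bubble Necessity Theorem of \citet{HiranoToda2025JPE} (the present-value argument sketched just after the proposition), whereas you extract the sharp rate $\xi_{1t}\sim v^*\mu^t$ with $v^*<0$ directly from the linearization at $\xi_f^*$; yours is self-contained but model-specific, the paper's is a one-line citation that transfers unchanged to the production economies later on. For \ref{item:determinacy_Samuelson2} the contrast is sharper: the paper applies the local stable manifold theorem at $\xi_b^*$ to get a unique convergent orbit for \emph{small} $\xi_{20}$, then reaches arbitrary $D_0$ by a time-shift trick---start the autonomous system at a late time $T$ where $\xi_{2T}=\mu^T D_0$ is small, apply the local result there, and solve the recursion backward to $t=0$ via the explicit inverse $\xi_{1t}=\beta a\bigl(b/(\xi_{1,t+1}+\mu\xi_{2t})+1+\beta\bigr)^{-1}>0$. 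Your global shooting argument replaces both the stable manifold theorem and the backward-solving step with monotonicity of $g$ and the scalar reduction; it is more elementary and in fact proves more than the proposition states (that the convergent orbit is the \emph{only} equilibrium, not merely the only one converging to $\xi_b^*$). The paper's approach, in turn, ports directly to the higher-dimensional systems of \S\ref{subsec:Tirole} and \S\ref{sec:lev} where no scalar reduction is available. The technical points you flag are routine once monotonicity is used: it makes $\mathcal N$ a down-set and $\mathcal P$ an up-set, so openness reduces to a one-sided continuity check.
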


The proof is technical and is deferred to the appendix. However, it is very simple to see that there cannot be any equilibrium paths converging to the fundamental steady state $\xi_f^*$. To see this, suppose an equilibrium with $\xi_t\to \xi_f^*$ exists and let $R_t>0$ be the gross risk-free rate implied by \eqref{eq:euler_young}. Then
\begin{align*}
    R_t&=\frac{1}{\beta}\frac{bG^{t+1}+P_{t+1}+D_{t+1}}{aG^t-P_t}\\
    &=\frac{G}{\beta}\frac{b+\xi_{1,t+1}+\xi_{2,t+1}}{a-\xi_{1t}}\to \frac{bG}{\beta a}<G_d
\end{align*}
as $t\to\infty$. Since the risk-free rate is asymptotically lower than the dividend growth rate, the present value of the dividend stream is infinite, which is impossible in equilibrium.

\section{Equilibrium selection by dividend injection}\label{sec:inject}

The idea of Proposition \ref{prop:determinacy_Samuelson} to eliminate a fundamental equilibrium (an equilibrium in which the bubble asset has no value) by adding asymptotically negligible but positive dividends to the bubble asset is fairly general. In this section we discuss this ``dividend injection'' procedure in a general setting and present applications to canonical models in the literature.

\subsection{Existence of bubbly equilibrium}\label{subsec:exist}

We start with a system of reduced-form equations
\begin{subequations}\label{eq:system}
\begin{align}
    W'&=G(R)W, \label{eq:system_dynamics}\\
    s(R)W&=B. \label{eq:system_clear}
\end{align}
\end{subequations}
Here $W$ ($W'$) is the current (next period's) ``aggregate wealth'';\footnote{Here and elsewhere, we use quotation marks because the interpretation is irrelevant, as long as the model satisfies Assumptions \ref{asmp:G} and \ref{asmp:s} below.} $R$ is the gross risk-free rate; $G(R)$ is the gross growth rate of the economy given the interest rate; $s(R)$ is the ``saving rate'' or the demand for risk-free assets per unit of wealth, given the interest rate; and $B$ is the aggregate supply of the risk-free asset. Thus \eqref{eq:system_dynamics} and \eqref{eq:system_clear} can be interpreted as the aggregate resource constraint and market clearing condition, respectively.

Although the system of equations \eqref{eq:system} is reduced-form, it can often be derived from first principles. For instance, suppose agents have homothetic preferences and solve an optimal consumption-portfolio problem. In this setting, because consumption and the demand for particular assets are proportional to wealth due to homotheticity, we obtain equations of the form \eqref{eq:system} after aggregating across agents. Although this argument is somewhat informal, we shall present specific examples in the remainder of the paper.

The economy may feature equilibria that are ``stationary'' in some sense. One example is a stationary equilibrium in which $W>0$ is constant. Then the equilibrium interest rate is pinned down by $G(R)=1$ using the resource constraint \eqref{eq:system_dynamics}. Given this interest rate, the steady state $W$ is pinned down by $W=B/s(R)$ using the market clearing condition \eqref{eq:system_clear}. Another example is a balanced growth path in which $W$ grows at a constant rate. For such an equilibrium, we require that the risk-free asset is in zero net supply, so $B=0$. Then the equilibrium interest rate is pinned down by $s(R)=0$ using the market clearing condition \eqref{eq:system_clear}. Given this interest rate, $W$ grows at rate $G(R)$ using \eqref{eq:system_dynamics}.

In either case, let $G=G(R)$ be the equilibrium growth rate of the economy. In equilibrium, it could be $R\ge G$ or $R<G$. We now argue that an equilibrium with $R<G$ is implausible. To see this, introduce an asset in unit supply that pays dividend $D_t=D_0 G_d^t$ at time $t$, where the dividend growth rate satisfies the ``bubble necessity condition'' \citep{HiranoToda2025JPE}
\begin{equation}
	R<G_d<G. \label{eq:necessity}
\end{equation}
Then the dividend-to-wealth ratio is
\begin{equation*}
    \frac{D_t}{W_t}=\frac{D_0}{W_0}(G_d/G)^t\to 0
\end{equation*}
as $t\to\infty$ because $G_d<G$ by \eqref{eq:necessity}. Hence the asset is asymptotically irrelevant and we may conjecture that the perturbed economy will converge to the original equilibrium. However, if such an equilibrium exists, then the asset price must be infinite because the present value of time $t$ dividend, which is approximately $D_0(G_d/R)^t$, diverges to infinity because $G_d>R$, which is impossible.\footnote{\citet{HiranoToda2025JPE} make this argument precise.} This argument shows that an equilibrium with $R<G$ is fragile in the sense that if we introduce an asset with asymptotically negligible dividends (with growth rate between the interest rate and the economic growth rate), the original equilibrium can no longer be supported as a steady state in the perturbed economy.

But if an equilibrium with $R<G$ is implausible, are there other (plausible) equilibria? The answer is typically yes. Suppose that the growth rate of the economy satisfies $G(R)<R$ for large enough $R$. This is a natural condition: it trivially holds in models with exogenous growth (because $G(R)=G$ is constant); in endogenous growth models, high interest rates typically reduce investment and hence growth. Thus if there exists a fundamental equilibrium with $R_f<G(R_f)$ and the growth rate satisfies $R>G(R)$ for large enough $R$, (assuming continuity) there exists $R_b\in (R_f,\infty)$ such that $R_b=G(R_b)$. To construct such an equilibrium, introduce a unit supply of pure bubble asset with price $P_t$, which can be thought of a limit of a vanishingly small dividend-paying asset. Then \eqref{eq:system} becomes
\begin{subequations}\label{eq:bubble}
\begin{align}
    W_{t+1}&=R_bW_t, \label{eq:bubble_dynamics}\\
    s(R_b)W_t&=P_t. \label{eq:bubble_clear}
\end{align}
\end{subequations}
Using \eqref{eq:bubble}, the gross return on the bubble asset is
\begin{equation*}
    \frac{P_{t+1}}{P_t}=\frac{s(R_b)W_{t+1}}{s(R_b)W_t}=\frac{W_{t+1}}{W_t}=R_b,
\end{equation*}
so the no-arbitrage condition holds and we indeed have an equilibrium.

We now formalize the preceding argument.

\begin{asmp}\label{asmp:G}
Given the gross risk-free rate $R>0$, the aggregate dynamics satisfies $W_{t+1}=G(R)W_t$, where $G:(0,\infty)\to (0,\infty)$ is continuous and satisfies $G(R)<R$ for large enough $R$.
\end{asmp}

\begin{asmp}\label{asmp:s}
Given the gross risk-free rate $R>0$ and aggregate state variable $W_t$, the aggregate demand for the risk-free asset is $pW_t$ for $p\in s(R)$, where $s:\R\twoheadrightarrow\R$.
\end{asmp}

In Assumption \ref{asmp:s}, we let the demand for the risk-free asset be a correspondence instead of a function because agents could be indifferent between two assets if they have identical returns, in which case the demand is indeterminate. Under the maintained assumptions, the following theorem shows that whenever the interest rate is lower than the economic growth rate in the fundamental equilibrium, we may construct a bubbly equilibrium in which the interest rate equals the economic growth rate.

\begin{thm}[Existence of bubbly equilibrium]\label{thm:exist}
Consider an economy satisfying Assumptions \ref{asmp:G} and \ref{asmp:s} and suppose the risk-free asset is in zero net supply. If there exists a fundamental equilibrium interest rate $R_f$ with $R_f<G(R_f)$, then there exists $R_b\in (R_f,\infty)$ with $R_b=G(R_b)$. If $p\in s(R_b)\cap(0,\infty)$, then there exists a bubbly equilibrium with interest rate $R_b$ and asset price $P_t=pW_t$.
\end{thm}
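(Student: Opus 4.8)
The plan is to verify the three claims in order, since each is essentially a consequence of the structural assumptions plus the construction already laid out before the theorem statement. The theorem asserts (a) existence of a bubbly interest rate $R_b \in (R_f,\infty)$ with $R_b = G(R_b)$, and (b) that this $R_b$ together with a price $P_t = pW_t$ (for $p \in s(R_b)\cap(0,\infty)$) constitutes a genuine bubbly equilibrium. I would treat (a) as an intermediate-value argument and (b) as a verification that the no-arbitrage and market-clearing conditions hold along the constructed path.

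First I would establish the existence of $R_b$. Define $\phi(R) \coloneqq G(R) - R$, which is continuous on $(0,\infty)$ by Assumption \ref{asmp:G}. At $R = R_f$ we are given $R_f < G(R_f)$, so $\phi(R_f) > 0$. By Assumption \ref{asmp:G}, $G(R) < R$ for all sufficiently large $R$, so there exists some $\bar R > R_f$ with $\phi(\bar R) < 0$. The intermediate value theorem then yields a point $R_b \in (R_f, \bar R) \subseteq (R_f, \infty)$ with $\phi(R_b) = 0$, i.e.\ $R_b = G(R_b)$. This is the cleanest step and requires only continuity and the sign change.

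Next I would verify that the proposed path is an equilibrium. Fix any $p \in s(R_b)\cap(0,\infty)$ (nonempty by hypothesis) and set $P_t = pW_t$. The aggregate dynamics \eqref{eq:system_dynamics} with interest rate $R_b$ reads $W_{t+1} = G(R_b)W_t = R_b W_t$, using $R_b = G(R_b)$ from the previous step. The risk-free asset is in zero net supply ($B = 0$), so the market-clearing condition \eqref{eq:system_clear} is satisfied through the bubble asset: the demand for the risk-free-return asset per unit wealth is $p \in s(R_b)$, and the bubble asset, earning the same gross return, absorbs this demand with aggregate value $pW_t = P_t$. It then remains to check that the bubble asset's gross return equals $R_b$, i.e.\ the no-arbitrage condition. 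Since $D = 0$ for the pure bubble asset, the return is $P_{t+1}/P_t = (pW_{t+1})/(pW_t) = W_{t+1}/W_t = R_b$, exactly as in the display preceding the theorem. Hence agents are indeed indifferent between the risk-free asset and the bubble asset, consistent with the correspondence-valued $s$ in Assumption \ref{asmp:s}, and all equilibrium conditions hold.

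The main obstacle, such as it is, lies less in the IVT step than in being precise about what ``bubbly equilibrium'' means in this reduced-form setting and why $B = 0$ lets the market-clearing condition \eqref{eq:system_clear} be reinterpreted as pricing the bubble asset rather than the risk-free asset. The subtlety is that Assumption \ref{asmp:s} gives aggregate demand $pW_t$ for an asset yielding the risk-free return, and one must argue that because the bubble asset delivers exactly the gross return $R_b$, it is a perfect substitute for the risk-free asset; the indifference is precisely why $s$ is allowed to be a correspondence. I would make explicit that $p > 0$ is needed so the bubble has strictly positive value (distinguishing it from the fundamental steady state), and that the positivity of $W_t$ is inherited from $W_0 > 0$ and $R_b > 0$. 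Once this interpretive point is stated cleanly, the verification reduces to the short computations above.
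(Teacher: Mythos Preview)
Your proposal is correct and follows essentially the same approach as the paper, which simply states that the result is ``immediate from the argument in the main text'': an intermediate-value argument for the existence of $R_b$, followed by verifying that $P_{t+1}/P_t = W_{t+1}/W_t = R_b$ so the no-arbitrage condition holds. Your write-up is in fact more explicit than the paper's own treatment about the role of continuity, the correspondence-valued $s$, and the interpretation of market clearing when $B=0$.
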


\begin{proof}
Immediate from the argument in the main text.
\end{proof}

Whenever the assumptions of Theorem \ref{thm:exist} are satisfied, we may eliminate the fundamental steady state by introducing an asset with small dividends with an appropriate growth rate just as we did in Proposition \ref{prop:determinacy_Samuelson}. This procedure will often lead to a unique equilibrium path converging to the bubbly steady state. However, the exact implementation depends on the model specifics. In the next few subsections, we apply Theorem \ref{thm:exist} and the elimination of fundamental equilibria by dividend injection to canonical models from the literature of rational bubbles with overlapping generations or infinite horizon and with or without production.

\subsection{OLG endowment economy \texorpdfstring{\citep{Samuelson1958}}{}}\label{subsec:Samuelson}

We revisit the OLG model in \S\ref{sec:example}. To apply Theorem \ref{thm:exist}, define the ``aggregate wealth'' $W_t$ by the income of the young, so $W_t=aG^t$. Then $W_{t+1}=GW_t$, so $G(R)=G$ in \eqref{eq:system_dynamics} is constant. We next solve for the ``saving rate'' $s(R)$ in \eqref{eq:system_clear}. Letting $R$ be the gross risk-free rate and $s$ be the savings of the young, the Euler equation implies
\begin{equation*}
    1=\beta R\left(\frac{bG^{t+1}+Rs}{aG^t-s}\right)^{-1}\iff s=\frac{G^t}{1+\beta}\left(\beta a-\frac{bG}{R}\right).
\end{equation*}
Therefore we may define the ``saving rate'' in \eqref{eq:system_clear} by
\begin{equation*}
    s(R)=\frac{s}{aG^t}=\frac{1}{1+\beta}\left(\beta-\frac{bG}{aR}\right).
\end{equation*}
Thus the fundamental equilibrium condition is
\begin{equation*}
    s(R)=0\iff R_f=\frac{bG}{\beta a},
\end{equation*}
which is precisely \eqref{eq:R_autarky}. The low interest condition is therefore
\begin{equation*}
    G>R_f=\frac{bG}{\beta a}\iff \beta a>b,
\end{equation*}
which is precisely the existence condition for a bubbly equilibrium in Proposition \ref{prop:example_eq}. Clearly $s(R)>0$ for $R>R_f$, so all assumptions of Theorem \ref{thm:exist} are satisfied. The bubbly equilibrium condition is $R_b=G$ and the asset price is
\begin{equation*}
    P_t=s(R_b)W_t=\frac{\beta a-b}{1+\beta}G^t,
\end{equation*}
which corresponds to the balanced growth bubbly equilibrium \eqref{eq:Samuelson_BG}.

\subsection{OLG production economy \texorpdfstring{\citep{Tirole1985}}{}}\label{subsec:Tirole}

\citet{Tirole1985} introduces a pure bubble asset in the \citet{Diamond1965} overlapping generations neoclassical growth model. Since the model is well known, we only describe it briefly.\footnote{See \citet[\S 5.2]{BlanchardFischer1989} for a textbook treatment.} The number of young agents at time $t$ is $N_t=N_0G^t$. The young maximize utility $U(c_t^y,c_{t+1}^o)$ subject to the budget constraint $c_t^y+c_{t+1}^o/R_t=\omega_t$, where $R_t$ is the gross risk-free rate from time $t$ to $t+1$ and $\omega_t$ is the wage. The firm has a constant-returns-to-scale production function $F(K,N)$, hires labor $N$ to maximize the profit $F(K,N)-\omega N$, and distributes the profit to capital owners. Letting $K_t$ be aggregate capital, $k_t=K_t/N_t$ be capital per capita (capital-labor ratio), and $f(k)\coloneqq F(k,1)$, profit maximization implies
\begin{equation*}
    \omega_t=\omega(k_t)\coloneqq f(k_t)-k_tf'(k_t).
\end{equation*}
The gross risk-free rate between time $t-1$ and $t$ is then
\begin{equation}
    R_{t-1}=\frac{F(K_t,N_t)-\omega_tN_t}{K_t}=f'(k_t). \label{eq:R_Tirole}
\end{equation}
Since $\omega_t$ and $R_{t-1}$ are functions of $k_t$, the consumption of the young can be written as $c_t^y=c^y(k_t,k_{t+1})$. Hence the aggregate demand for the risk-free asset (in excess of capital) at time $t$ is
\begin{equation}
    S_t\coloneqq N_t(\omega(k_t)-c^y(k_t,k_{t+1}))-K_{t+1}=N_t(\omega(k_t)-c^y(k_t,k_{t+1})-Gk_{t+1}). \label{eq:rfdemand_Tirole}
\end{equation}

Let us show that this model fits in the framework described in Theorem \ref{thm:exist}. Suppose the capital-labor ratio $k$, wage $\omega$, and gross risk-free rate $R$ are constant over time. As in \S\ref{subsec:Samuelson}, define the ``aggregate wealth'' by the aggregate income of the young, so $W_t=\omega N_0G^t$. Then $W_{t+1}=GW_t$, so Assumption \ref{asmp:G} holds. Using \eqref{eq:rfdemand_Tirole}, the ``saving rate'' is
\begin{equation}
    s(R)\coloneqq \frac{S_t}{W_t}=1-\frac{c^y(k,k)+Gk}{\omega(k)}, \label{eq:DR_Tirole}
\end{equation}
where $k=(f')^{-1}(R)$. Therefore Assumption \ref{asmp:s} holds. If the risk-free asset is in zero net supply, we obtain the fundamental equilibrium interest rate by setting $s(R)=0$. Letting $k_f$ be the corresponding capital per capita, we have $R_f=f'(k_f)$. If the low interest condition $R_f<G$ holds, under standard monotonicity, concavity, and Inada conditions on $f$, there exists a unique $k_b<k_f$ such that $R_f=f'(k_f)<f'(k_b)=G\eqqcolon R_b$. If $s(G)>0$, there exists a bubbly equilibrium with asset price
\begin{equation*}
    P_t=s(R_b)W_t=(\omega(k_b)-c^y(k_b,k_b)-Gk_b)N_0G^t.
\end{equation*}

We now show that dividend injection eliminates the fundamental equilibrium. As in \S\ref{sec:example}, introduce an asset in unit supply with dividend $D_t=D_0G_d^t$, where the bubble necessity condition $R_f<G_d<G$ \eqref{eq:necessity} holds. Combining the no-arbitrage condition \eqref{eq:noarbitrage}, the definition of the risk-free rate \eqref{eq:R_Tirole}, and the per capita demand for the asset \eqref{eq:DR_Tirole}, the equilibrium is characterized by the system of equations
\begin{subequations}\label{eq:system_Tirole}
    \begin{align}
        f'(k_{t+1})&=\frac{P_{t+1}+D_{t+1}}{P_t}, \label{eq:system_Tirole1}\\
        \omega(k_t)-c^y(k_t,k_{t+1})-Gk_{t+1}&=P_t/N_t, \label{eq:system_Tirole2}\\
        D_{t+1}&=G_dD_t. \label{eq:system_Tirole3}
    \end{align}
\end{subequations}
Here \eqref{eq:system_Tirole1} is the no-arbitrage condition between capital and the asset, \eqref{eq:system_Tirole2} is the condition that the asset absorbs the aggregate savings, and \eqref{eq:system_Tirole3} governs the evolution of dividends. To study asymptotically balanced growth paths, define the auxiliary variables $\xi_t=(\xi_{1t},\xi_{2t},\xi_{3t})=(k_t,P_t/N_t,D_t/N_t)$. Then the system of equations \eqref{eq:system_Tirole} can be written as $H(\xi_t,\xi_{t+1})=0$, where
\begin{subequations}\label{eq:H_Tirole}
\begin{align}
    H_1(\xi,\eta)&=\eta_2+\eta_3-\frac{\xi_2}{G}f'(y_1),\\
    H_2(\xi,\eta)&=\xi_2+G\eta_1+c^y(\xi_1,\eta_1)-\omega(\xi_1),\\
    H_3(\xi,\eta)&=\eta_3-\frac{G_d}{G}\xi_3.
\end{align}
\end{subequations}
The fundamental and bubbly steady states are given by
\begin{equation*}
    \xi_f^*\coloneqq (k_f,0,0)\quad \text{and} \quad \xi_b^*\coloneqq \left(k_b,\omega(k_b)-c^y(k_b,k_b)-Gk_b,0\right).
\end{equation*}
We now present a result establishing the nonexistence of fundamental equilibria after dividend injection and a sufficient condition for the existence of a locally determinate bubbly equilibrium. Before doing so, recall that with the utility function $U(c_1,c_2)$, the elasticity of (intertemporal) substitution $\varepsilon$  is defined by
\begin{equation}
    \frac{1}{\varepsilon}=-\frac{\partial \log(U_1/U_2)}{\partial \log (c_1/c_2)}, \label{eq:EIS}
\end{equation}
see for instance the discussion in \citet{FlynnSchmidtToda2023TE}.

\begin{thm}\label{thm:determinacy_Tirole}
Suppose $R_f=f'(k_f)<G$, $s(G)>0$, and $G_d\in (R_f,G)$. Then the following statements are true.
\begin{enumerate}
\item\label{item:determinacy_Tirole1} There exist no equilibrium paths converging to the fundamental steady state $\xi_f^*$.
\item\label{item:determinacy_Tirole2} If the elasticity of intertemporal substitution satisfies
\begin{equation}
    \varepsilon>1+\frac{G^2}{f''}\frac{\omega}{c^y(\omega-c^y)},\label{eq:EIS_cond}
\end{equation}
where all functions are evaluated at $k_b$, then the bubbly steady state $\xi_b^*$ is locally determinate. More precisely, for any $(k_0,D_0)$ sufficiently close to $(k_b,0)$, there exists a unique equilibrium path $\set{\xi_t}_{t=0}^\infty$ converging to $\xi_b^*$.
\end{enumerate}
\end{thm}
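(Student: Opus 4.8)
The plan is to handle the two parts by completely different means: part \ref{item:determinacy_Tirole1} is an elementary present-value argument identical in spirit to the one following Proposition \ref{prop:determinacy_Samuelson}, while part \ref{item:determinacy_Tirole2} is a local eigenvalue count for the linearization of \eqref{eq:H_Tirole} at $\xi_b^*$.

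For part \ref{item:determinacy_Tirole1}, I would argue by contradiction. Suppose an equilibrium satisfies $\xi_t\to\xi_f^*=(k_f,0,0)$. Since $R_t=f'(k_{t+1})$ by \eqref{eq:R_Tirole} and $f'$ is continuous, the interest rate converges to $R_f=f'(k_f)$, which lies strictly below $G_d$ by hypothesis. Iterating the no-arbitrage condition \eqref{eq:noarbitrage} forward and using $P_{t+T}\ge 0$ gives, for every $T$,
\begin{equation*}
    P_t=\sum_{s=1}^{T}\frac{D_{t+s}}{R_t R_{t+1}\cdots R_{t+s-1}}+\frac{P_{t+T}}{R_t\cdots R_{t+T-1}}\ge \sum_{s=1}^{T}\frac{D_{t+s}}{R_t\cdots R_{t+s-1}}.
\end{equation*}
Fixing $\bar R\in(R_f,G_d)$ with $R_{t+j}\le\bar R$ for all large $j$, the summands are eventually bounded below by a constant times $(G_d/\bar R)^s$, which diverges because $G_d>\bar R$. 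Hence $P_t=\infty$, contradicting $P_t/N_t\to 0$, and part \ref{item:determinacy_Tirole1} follows.

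For part \ref{item:determinacy_Tirole2}, I would linearize $H(\xi_t,\xi_{t+1})=0$ about $\xi_b^*$, writing the deviation dynamics as $D_\xi H\,x_t+D_\eta H\,x_{t+1}=0$ so that the eigenvalues solve $\det(D_\xi H+\lambda D_\eta H)=0$, evaluated at $\xi_b^*$ (where $f'(k_b)=G$ and the price coordinate equals $p_b\coloneqq\omega(k_b)-c^y(k_b,k_b)-Gk_b>0$). Because $H_3$ involves only $(\xi_3,\eta_3)$, the dividend block decouples and contributes the stable eigenvalue $\lambda=G_d/G\in(0,1)$; the remaining two eigenvalues are the roots of a quadratic $p(\lambda)=A\lambda^2+B\lambda+C$ in the $(k,P/N)$ variables, with coefficients built from $f'$, $f''$, $\omega'$, and the partials $c^y_1,c^y_2$ of $c^y$ at $k_b$. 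The structural fact I would verify by direct substitution is that the cross terms telescope to give $p(1)=\tfrac{p_b}{G}f''<0$ (using $f''<0$, $p_b>0$). Consequently the block is a saddle -- exactly one root inside and one outside the unit circle -- precisely when $p(1)p(-1)<0$, i.e. when $p(-1)>0$.

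It then remains to reduce $p(-1)>0$ to the elasticity condition \eqref{eq:EIS_cond}. Here I would write $c^y_2=(\partial c^y/\partial R)\,f''(k_b)$ and compute $\partial c^y/\partial R$ from the household optimality condition $U_1=RU_2$ (with $c^o=R(\omega-c^y)$) by the implicit function theorem, then use the definition \eqref{eq:EIS} to trade the second derivatives of $U$ for $\varepsilon$; after collecting terms and using $\omega-c^y=p_b+Gk_b$, the inequality $p(-1)>0$ should rearrange into exactly \eqref{eq:EIS_cond}. Granting this, the linearized system has two eigenvalues inside the unit circle ($G_d/G$ together with the stable root of $p$), matching the two predetermined coordinates $k_0$ and $D_0$; the stable manifold then has dimension two and projects locally diffeomorphically onto the $(k,D/N)$-plane, so for each $(k_0,D_0)$ near $(k_b,0)$ there is a unique jump value $P_0/N_0$ placing the economy on it, giving a unique convergent path. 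The hard part is this last reduction -- translating the comparative statics of $c^y$ into $\varepsilon$ and checking that the algebra collapses to the sharp bound \eqref{eq:EIS_cond} rather than a looser sufficient condition.
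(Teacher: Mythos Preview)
Your treatment of part \ref{item:determinacy_Tirole1} is correct and is exactly what the paper does.

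For part \ref{item:determinacy_Tirole2} your architecture---linearize $H$, peel off the dividend eigenvalue $G_d/G\in(0,1)$, analyze the residual $2\times2$ block, invoke the stable manifold theorem with two predetermined variables---matches the paper. The divergence is in \emph{which} evaluations of the characteristic polynomial you use and, consequently, which inequality falls out. Write $p=\omega'-c_1^y$, $r=-\tfrac{\xi_2^*}{G}f''(k_b)$, $s=G+c_2^y$; then $p,r>0$ unconditionally (the paper verifies $c_1^y<0$ via the first-order condition). With your convention $p(\lambda)=\det(D_\xi H+\lambda D_\eta H)$ on the $2\times2$ block one gets $p(\lambda)=-s\lambda^2+(p+r+s)\lambda-p$, so $p(1)=r>0$ (your sign on $p(1)$ is off) and $p(-1)=-(2s+2p+r)$. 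Your saddle criterion $p(1)p(-1)<0$ therefore becomes $2s+2p+r>0$, which is strictly \emph{weaker} than $s>0$ because $p,r>0$. Hence the algebra will \emph{not} ``rearrange into exactly \eqref{eq:EIS_cond}''; you would land on a looser sufficient condition. This still proves the theorem as stated (since \eqref{eq:EIS_cond} implies your weaker inequality), but not by the route you describe.

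The paper instead evaluates the monic characteristic polynomial $\Phi$ of the $2\times2$ block of $Dh(\xi_b^*)$ at $0$ and $1$: under $s>0$ one has $\Phi(0)=p/s>0$ and $\Phi(1)=-r/s<0$, placing the two roots in $(0,1)$ and $(1,\infty)$. The whole content of the EIS computation is then that \eqref{eq:EIS_cond} is \emph{equivalent} to the single sign condition $s=G+\partial c^y/\partial k'>0$; this is what the implicit differentiation of $U_1=f'(k')U_2$ together with \eqref{eq:EIS} delivers. So if you want to hit \eqref{eq:EIS_cond} on the nose, track the sign of $G+c_2^y$ rather than $p(-1)$.
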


In the textbook treatment of the \citet{Tirole1985} model, \citet[p.~232]{BlanchardFischer1989} state ``The knife-edge case where the bubble is just such that the economy is on the saddle point path, though unlikely, is interesting''. Theorem \ref{thm:determinacy_Tirole}\ref{item:determinacy_Tirole1} reveals that the bubbly saddle point path is actually not a knife-edge case at all. It is in fact the stable ``bubbleless'' or fundamental equilibrium that is a knife-edge case because any small perturbations to the dividends will eliminate this equilibrium as well as any equilibria converging to the fundamental steady state.

To our knowledge, most authors discuss the saddle path property of the bubbly steady state without showing it from first principles.\footnote{For instance, \citet[p.~268, Endnote 16]{BlanchardFischer1989} state ``Care must be taken in using a phase diagram to analyze the dynamics of a \emph{difference} equation system. [\ldots] Thus we must check in this case whether the system is indeed saddle point stable [\ldots]. This check is left to the reader''.} As we can see from Theorem \ref{thm:determinacy_Tirole}\ref{item:determinacy_Tirole2}, the saddle path property (local determinacy) of the bubbly steady state does not come for free.\footnote{See \citet{Calvo1978} and \citet{Woodford1984} for a discussion of equilibrium indeterminacy in OLG models.} Note that the right-hand side of \eqref{eq:EIS_cond} is strictly less than 1 because $f''<0$. Thus the sufficient condition for local determinacy is that the elasticity of substitution is not too much below 1, which resembles the characterization of equilibrium uniqueness in general equilibrium models \citep{TodaWalsh2017ETB,TodaWalsh2024JME}.

As a concrete example, suppose that the production function is Cobb-Douglas with a constant depreciation rate, so $f(k)=Ak^\alpha+(1-\delta)k$ for some $A>0$, $\alpha\in (0,1)$, and $\delta\in [0,1]$. Then at the bubbly steady state $k=k_b$, we have
\begin{align*}
    G=f'(k)&=A\alpha k^{\alpha-1}+1-\delta,\\
    f''(k)&=A\alpha(\alpha-1)k^{\alpha-2},\\
    \omega(k)&=f(k)-kf'(k)=A(1-\alpha)k^\alpha.
\end{align*}
Using $c^y(\omega-c^y)\le \omega^2/4$, the lower bound in \eqref{eq:EIS_cond} can be bounded above as
\begin{equation*}
    1+\frac{G^2}{f''}\frac{\omega}{c^y(\omega-c^y)}\le 1+\frac{4G^2}{\omega f''}=1-\frac{4\alpha}{(1-\alpha)^2}\left(1-\frac{1-\delta}{G}\right)^{-2}.
\end{equation*}
Therefore we obtain a sufficient condition for local determinacy that involves only exogenous parameters $G,\alpha,\delta$. Furthermore, \citet[\S V.A]{HiranoToda2025JPE} prove the nonexistence of any fundamental equilibria (not necessarily converging to the steady state) in the special case of Cobb-Douglas utility functions.

\subsection{Infinite-horizon endowment economy \texorpdfstring{\citep{Kocherlakota1992}}{}} \label{subsec:Kocherlakota}

\citet[Example 1]{Kocherlakota1992} considers a two-agent economy (with relative risk aversion equal to 2) in which aggregate endowment grows at a constant rate, the individual endowments alternate between high and low levels every period, and agents trade a pure bubble asset subject to a shortsales constraint. This example builds on \citet{Bewley1980}, which features no growth but a general utility function. Here we follow the exposition in \citet[\S3.2]{HiranoToda2024JME}.

The two agents have constant relative risk aversion (CRRA) utility
\begin{equation*}
    \sum_{t=0}^\infty \beta^t\frac{c_t^{1-\gamma}}{1-\gamma}.
\end{equation*}
The individual endowments are $(e_{1t},e_{2t})=(aG^t,bG^t)$ when $t$ is odd and $(bG^t,aG^t)$ when $t$ is even, where $a>b>0$ and $G>0$ is the growth rate of the aggregate endowment. Call the agent with endowment $aG^t$ ($bG^t$) ``rich'' (``poor''). The only asset in the economy is a pure bubble asset in unit supply, which is initially held by the poor agent.

As in \S\ref{subsec:Samuelson}, define the ``aggregate wealth'' by the income of the rich agent, so $W_t=aG^t$ and hence $W_{t+1}=GW_t$. Letting $R$ be the gross risk-free rate and $s$ be the savings of the rich agent (who wishes to save and hence the shortsales constraint does not bind), the Euler equation implies
\begin{equation*}
    1=\beta R\left(\frac{bG^{t+1}+Rs}{aG^t-s}\right)^{-\gamma}\iff s=\frac{(\beta R)^{1/\gamma}-\frac{b}{a}G}{R+(\beta R)^{1/\gamma}}aG^t\eqqcolon s(R)W_t.
\end{equation*}
We obtain the fundamental equilibrium interest rate by setting $s(R)=0$. The low interest condition is therefore
\begin{equation}
    G>R_f=\frac{1}{\beta}\left(\frac{b}{a}G\right)^\gamma\iff b<(\beta G^{1-\gamma})^{1/\gamma}a, \label{eq:low_interest_Kocherlakota}
\end{equation}
which is precisely the bubble existence condition in \citet*[Proposition 2]{HiranoToda2024JME}.\footnote{In addition to the low interest condition \eqref{eq:low_interest_Kocherlakota}, it is necessary and sufficient for equilibrium existence that $\beta G^{1-\gamma}<1$, which implies the Euler inequality for the poor agent and the transversality condition for individual optimality. See the discussion in \citet*[\S3.2]{HiranoToda2024JME} for details.} Under this condition, all assumptions of Theorem \ref{thm:exist} are satisfied. The bubbly equilibrium condition is $R_b=G$ and the asset price is
\begin{equation*}
    P_t=s(R_b)W_t=\frac{(\beta G^{1-\gamma})^{1/\gamma}a-b}{1+(\beta G^{1-\gamma})^{1/\gamma}}G^t.
\end{equation*}

\section{Entrepreneurial economy with leverage}\label{sec:lev}

As a more substantive application, we consider an infinite-horizon production economy in which investment opportunities arrive stochastically and agents (entrepreneurs) can borrow subject to the leverage constraint. The model we present here is an extension of \citet{Kocherlakota2009} with labor productivity growth and leverage, which is also related to the model of \citet*{HiranoJinnaiTodaLeverage} and \citet[\S V.B]{HiranoToda2025JPE}.\footnote{In our analyses here, the growth rate of the economy is (asymptotically) exogenously determined by labor productivity growth, while in \citet*{HiranoJinnaiTodaLeverage} and \citet[\S V.B]{HiranoToda2025JPE}, it is endogenously determined depending on the financial leverage or aggregate productivity.}

\subsection{Setup}\label{subsec:lev_setup}

We consider an infinite-horizon economy with a homogeneous good. Time is discrete and denoted by $t=0,1,\dotsc$.

\paragraph{Agents}
The economy is populated by two types of agents, entrepreneurs and workers. Workers supply labor inelastically and consume the entire wage (hand-to-mouth) for simplicity. The aggregate labor supply at time $t$ is $G^t$, where $G>1$ can be interpreted as the growth rate of the population or labor productivity. There is a unit mass of entrepreneurs (agents) indexed by $i\in [0,1]$. A typical agent has utility function
\begin{equation}
    \E_0\sum_{t=0}^\infty \beta^t\log c_t, \label{eq:log_utility}
\end{equation}
where $\beta\in (0,1)$ is the discount factor and $c_t\ge 0$ is consumption.

\paragraph{Investment opportunities and production}

Each period, an investment opportunity arrives to an agent with probability $\pi\in (0,1)$, which is independent across agents and time. If an agent has an investment opportunity at time $t-1$ (is a high-productivity ($H$) type) and decides to install capital $k$, then the agent can produce goods using the production function $F(k,n)$ at time $t$, where $n$ is labor input. As usual, suppose that $F$ is homogeneous of degree 1, strictly quasi-concave, continuously differentiable, and satisfies the Inada conditions. If an agent does not have an investment opportunity (is a low-productivity ($L$) type), there is no production. Therefore the production function at time $t$ can be written as $F_t(k,n)=A_{t-1}F(k,n)$, where $A_{t-1}=1$ if the agent has an investment opportunity at time $t-1$ and $A_{t-1}=0$ otherwise. Regardless of the investment opportunity, capital depreciates at rate $\delta\in [0,1]$.

\paragraph{Assets and budget constraint}

There is a risk-free asset in zero net supply and a pure bubble asset (land) in unit supply. Let $R_t$ be the gross risk-free rate between time $t$ and $t+1$, and let $P_t$ be the land price. Suppressing the individual subscript, the budget constraint of a typical agent is
\begin{equation}
    c_t+k_{t+1}+P_tx_t+b_t=A_{t-1}F(k_t,n_t)+(1-\delta)k_t-\omega_tn_t+P_tx_{t-1}+R_{t-1}b_{t-1},\label{eq:budget}
\end{equation}
where $k_t,n_t\ge 0$ are capital and labor inputs at time $t$, $A_{t-1}$ is an indicator for investment opportunity, $\omega_t\ge 0$ is the wage, and $x_t,b_t$ are land and bond holdings.

\paragraph{Leverage constraint}

Agents are subject to the leverage constraint
\begin{equation}
    k_{t+1}\le \lambda(k_{t+1}+P_tx_t+b_t),\label{eq:leverage}
\end{equation}
where $\lambda\ge 1$ is the leverage limit. Here $k_{t+1}+P_tx_t+b_t$ is total financial asset (``equity'') of the agent. The leverage constraint \eqref{eq:leverage} implies that total investment in the production technology cannot exceed some multiple of total
equity, which is essentially identical to the constraint in \citet*{HiranoJinnaiTodaLeverage}. If $\lambda=1$ (no leverage allowed), then \eqref{eq:leverage} reduces to
\begin{equation}
    k_{t+1}\le k_{t+1}+P_tx_t+b_t \iff -b_t\le P_tx_t,\label{eq:kocher_borrow}
\end{equation}
so borrowing ($-b_t$) must be collateralized by land, which is exactly the constraint considered in \citet{Kocherlakota2009}.

\paragraph{Equilibrium}

The economy starts at $t=0$ with some initial distribution of capital and land $\set{(k_{i0},x_{i,-1})}_{i\in I}$, where $(k_{i0},x_{i,-1})>0$ for all $i$. The definition of a rational expectations equilibrium is standard.

\begin{defn}[Rational expectations equilibrium]
Given the initial condition $\set{(k_{i0},x_{i,-1})}_{i\in I}$, a \emph{rational expectations equilibrium} consists of wages $\set{\omega_t}_{t=0}^\infty$, land prices $\set{P_t}_{t=0}^\infty$, interest rates $\set{R_t}_{t=0}^\infty$, and allocations $\set{(c_{it},k_{it},n_{it},x_{it},b_{it})_{i\in I}}_{t=0}^\infty$ such that the following conditions hold.
\begin{enumerate}
\item (Individual optimization) Agents maximize the utility \eqref{eq:log_utility} subject to the budget constraint \eqref{eq:budget} and the leverage constraint \eqref{eq:leverage}.
\item (Labor market clearing) For all $t$, we have $\int_I n_{it}\diff i=G^t$.
\item (Land market clearing) For all $t$, we have $\int_I x_{it}\diff i=1$.
\item (Bond market clearing) For all $t$, we have $\int_I b_{it}\diff i=0$.
\end{enumerate}
\end{defn}

\subsection{Equilibrium analysis}\label{subsec:lev_eq}

We analyze the equilibrium dynamics.

\paragraph{Labor demand and profit}
For an $L$ agent, it is clearly optimal to choose $n_t=0$. Consider an $H$ agent. Letting $f(k)=F(k,1)$, we have $F(k,n)=nf(k/n)$ by homogeneity. Therefore the first-order condition for profit maximization is
\begin{equation}
    \omega_t=f(y_t)-y_tf'(y_t),\label{eq:labor_foc}
\end{equation}
where $y_t=k_t/n_t$ is the capital-labor ratio. Since $(f(y)-yf'(y))'=-yf''(y)>0$, there exists a unique $y_t=y(\omega_t)>0$ satisfying \eqref{eq:labor_foc}. The labor demand is then
\begin{equation}
    n_t=k_t/y_t. \label{eq:nt}
\end{equation}
The maximized profit is
\begin{equation*}
    F(k_t,n_t)-\omega n_t=\frac{k_t}{y}[f(y)-(f(y)-yf'(y))]=f'(y)k_t.
\end{equation*}

\paragraph{Consumption and investment}
Let $w_t$ be the beginning-of-period wealth defined by the right-hand side of \eqref{eq:budget} maximized over $n_t$, so
\begin{equation}
    w_t\coloneqq (A_{t-1}f'(y_t)+1-\delta)k_t+P_tx_{t-1}+R_{t-1}b_{t-1}. \label{eq:wt}
\end{equation}
Due to log utility \eqref{eq:log_utility}, the optimal consumption rule is $c_t=(1-\beta)w_t$. Therefore the agent saves $\beta w_t$, which must equal $k_{t+1}+P_tx_t+b_t$ by accounting.

How agents allocate savings to capital ($k_{t+1}$), land ($x_t$), and bonds ($b_t$) depends on the interest rate and investment opportunities. In equilibrium, in order for investment to occur (so that there is demand for labor), the return on capital for $H$ agents must exceed the risk-free rate. Furthermore, since $L$ agents can always save at rate $1-\delta$ by holding idle capital, the risk-free rate cannot fall below this value. Therefore we have the bound
\begin{equation}
    f'(y_{t+1})+1-\delta\ge R_t\ge 1-\delta. \label{eq:R_bound}
\end{equation}
If the return on capital exceeds the risk-free rate, $H$ agents will obviously choose maximal leverage. Therefore using the leverage constraint \eqref{eq:leverage}, we obtain the optimal investment rule
\begin{subequations}\label{eq:k}
    \begin{align}
        &\text{$H$ agents:} &k_{t+1}&\begin{cases*}
            =\lambda\beta w_t & if $f'(y_{t+1})+1-\delta>R_t$,\\
            \le \lambda\beta w_t & if $f'(y_{t+1})+1-\delta=R_t$,
        \end{cases*} \label{eq:kH}\\
        &\text{$L$ agents:} & k_{t+1}&=\begin{cases*}
            0 & if $R_t>1-\delta$,\\
            \text{arbitrary} & if $R_t=1-\delta$.
        \end{cases*} \label{eq:kL}
    \end{align}
\end{subequations}

\paragraph{Aggregation}

We now aggregate the individual decision rules. Let $W_t=\int_Iw_{it}\diff i$ be the aggregate wealth. Let $K_t^H,K_t^L$ be the aggregate capital holdings of $H,L$ agents, $K_t=K_t^H+K_t^L$ the aggregate capital, and $Y_t=f'(y_t)K_t^H$ the aggregate output. Aggregating individual wealth \eqref{eq:wt} across all agents and using the land and bond market clearing conditions, we obtain
\begin{equation}
    W_t=Y_t+(1-\delta)K_t+P_t.\label{eq:Wt}
\end{equation}
Aggregating individual savings $k_{t+1}+P_tx_t+b_t=\beta w_t$ across all agents, we obtain
\begin{equation}
    K_{t+1}^H+K_{t+1}^L+P_t=\beta W_t. \label{eq:agg_saving}
\end{equation}
Since the fraction of $H$ agents is $\pi$, aggregating the capital investment \eqref{eq:kH} across $H$ agents, we obtain
\begin{equation}
    K_{t+1}^H\le \pi\lambda\beta W_t\eqqcolon \phi\beta W_t,\label{eq:KH}
\end{equation}
with equality if the return on capital strictly exceeds the interest rate. The parameter $\phi\coloneqq \pi\lambda$ can be interpreted as a measure of financial market imperfection. If $\phi\ge 1$ (so $\lambda\ge 1/\pi$), then the $H$ type can absorb all savings. Then the economy reduces to the frictionless case with $K_{t+1}^H=\beta W_t$ and $K_{t+1}^L=P_t=0$, which is uninteresting. 

Therefore assume $\phi<1$. In this case the $H$ type cannot absorb all savings, so the return on capital must exceed the interest rate and \eqref{eq:KH} holds with equality. Combining \eqref{eq:Wt}, \eqref{eq:agg_saving}, \eqref{eq:KH}, we can solve for the aggregate capital allocation $(K_{t+1}^H,K_{t+1}^L,K_{t+1})$ as a function of predetermined resources $Y_t+(1-\delta)K_t$ and the asset price $P_t$:
\begin{subequations}\label{eq:aggK}
    \begin{align}
        K_{t+1}^H&=\phi\beta(Y_t+(1-\delta)K_t)+\phi\beta P_t, \label{eq:KHP}\\
        K_{t+1}^L&=(1-\phi)\beta(Y_t+(1-\delta)K_t)-(1-\beta+\phi\beta)P_t, \label{eq:KLP} \\
        K_{t+1}&=\beta(Y_t+(1-\delta)K_t)-(1-\beta)P_t. \label{eq:KP}
    \end{align}
\end{subequations}

We can make two observations from \eqref{eq:aggK}. First, as the leverage limit $\lambda$ increases, so does $\phi=\pi\lambda$. Then $K^H$ increases and $K^L$ decreases, so higher leverage allows a more efficient allocation of capital. However, leverage has no direct effect on aggregate capital $K$ because \eqref{eq:KP} does not depend on $\phi$: leverage affects capital only through the asset price $P$. Second, as the asset price $P$ increases, $K^H$ increases and $K^L$ decreases, so an asset price bubble crowds out idle capital and crowds in investment.

\paragraph{Equilibrium}

We focus on balanced growth path equilibria in which the capital-labor ratio $y$, wage $\omega$, and risk-free rate $R$ are constant. Aggregating labor demand \eqref{eq:nt} across $H$ agents and using \eqref{eq:KH}, we obtain the aggregate labor
\begin{equation}
    G^t=N_t=\frac{K_t^H}{y}=\frac{\phi\beta}{y} W_{t-1}. \label{eq:N}
\end{equation}
We next derive the aggregate wealth dynamics similar to \eqref{eq:system_dynamics}. At time $t$, by \eqref{eq:agg_saving} aggregate savings is $\beta W_t$. By \eqref{eq:KH}, fraction $\phi$ of it is invested by $H$ agents as capital and the remaining fraction $1-\phi$ is saved (either in the pure bubble asset or idle capital). Since the return on capital is $f'(y)+1-\delta$ and the risk-free rate is $R$, aggregate wealth evolves according to
\begin{equation}
    W_{t+1}=(\phi(f'(y)+1-\delta)+(1-\phi)R)\beta W_t. \label{eq:W_dynamics}
\end{equation}
By \eqref{eq:N}, aggregate wealth $W_t$ must grow at rate $G$ in equilibrium. Therefore the coefficient of $W_t$ in \eqref{eq:W_dynamics} must equal $G$ and we obtain the equilibrium condition
\begin{equation}
    G(y,R)\coloneqq \beta(\phi(f'(y)+1-\delta)+(1-\phi)R)=G.\label{eq:GyR}
\end{equation}
In a fundamental equilibrium ($P_t=0$ for all $t$), because $L$ agents hold idle capital, the risk-free rate must be $R=1-\delta$. We thus obtain the following proposition.

\begin{prop}[Fundamental equilibrium]\label{prop:eq_fl}
Suppose $1\le \lambda<1/\pi$ and let $\phi=\pi\lambda$. Then there exists a unique fundamental balanced growth path equilibrium. The equilibrium is characterized by the following equations.
\begin{subequations}
    \begin{align}
        \phi f'(y)+1-\delta&=\frac{1}{\beta}G, \label{eq:eq_yf}\\
        \omega&=f(y)-yf'(y), \label{eq:eq_omegaf}\\
        R&=1-\delta, \label{eq:eq_Rf}\\
        P_t&=0. \label{eq:eq_Pf}
    \end{align}
\end{subequations}
\end{prop}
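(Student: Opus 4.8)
The plan is to assemble the four balanced-growth equations from the aggregate relations already derived in \S\ref{subsec:lev_eq} and then establish existence and uniqueness via monotonicity of $f'$. By definition a fundamental equilibrium has $P_t=0$ for all $t$, which is \eqref{eq:eq_Pf}. First I would pin down the risk-free rate. Since $\phi<1$, the investment rule \eqref{eq:KH} shows that $H$ agents can absorb at most $\phi\beta W_t$ of aggregate savings, while the aggregate saving identity \eqref{eq:agg_saving} with $P_t=0$ requires $K_{t+1}^H+K_{t+1}^L=\beta W_t$. Hence $K_{t+1}^L\ge(1-\phi)\beta W_t>0$, so $L$ agents necessarily hold strictly positive idle capital. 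By the $L$-agent investment rule \eqref{eq:kL} this is optimal only when $R=1-\delta$, and combined with the lower bound $R\ge 1-\delta$ from \eqref{eq:R_bound} it forces \eqref{eq:eq_Rf}.

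Next I would substitute $R=1-\delta$ into the balanced-growth condition \eqref{eq:GyR}: the terms $\phi(1-\delta)+(1-\phi)(1-\delta)$ collapse to $1-\delta$, and dividing by $\beta$ yields exactly \eqref{eq:eq_yf}, while \eqref{eq:eq_omegaf} is simply the labor first-order condition \eqref{eq:labor_foc} evaluated at the constant capital-labor ratio. For existence and uniqueness of $y$, I would observe that the right-hand side $\frac1\beta G-(1-\delta)$ is strictly positive because $G>1$ and $\beta<1$ give $G/\beta>1\ge 1-\delta$. Since $f''<0$ (as $-yf''(y)>0$ was noted above), the map $y\mapsto\phi f'(y)$ is continuous and strictly decreasing, and the Inada conditions give $\phi f'(0^+)=\infty$ and $\phi f'(\infty)=0$; therefore \eqref{eq:eq_yf} admits a unique solution $y>0$, which in turn determines $\omega$ and completes the characterization.

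The step requiring the most care is confirming that this profile is a genuine equilibrium rather than merely a solution of the aggregate equations. Here I would verify that labor markets clear via \eqref{eq:N} with aggregate wealth growing at rate $G$ by construction, that bonds are in zero net supply (with $H$ agents' borrowing offset by $L$ agents' bond holdings) and land clears trivially since $P_t=0$, and that individual optimality holds: $H$ agents optimally take maximal leverage because $f'(y)+1-\delta>R=1-\delta$, while $L$ agents are content to split their savings between bonds and idle capital since both earn gross return $1-\delta$. I expect the main obstacle to be the individual optimality and transversality verification; log utility helps decisively here, since homotheticity makes the consumption-saving split $c_t=(1-\beta)w_t$ immediate and reduces optimality to the portfolio return comparison just described.
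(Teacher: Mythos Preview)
Your proposal is correct and follows the same approach as the paper. The paper's proof is a single sentence---``immediate by setting $R=1-\delta$ in \eqref{eq:GyR}''---relying on the preceding text for the justification that $L$ agents must hold idle capital and hence $R=1-\delta$; you have simply spelled out that justification, added the Inada/monotonicity argument for existence and uniqueness of $y$, and included an equilibrium-verification step that the paper leaves implicit.
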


The proof of Proposition \ref{prop:eq_fl} is immediate by setting $R=1-\delta$ in \eqref{eq:GyR}. The left-hand side of \eqref{eq:eq_yf} is the average return on capital, which pins down the capital-labor ratio $y$. Note that capital generates a return of $1-\delta$ from storage, and only fraction $\phi<1$ is used for production. The wage \eqref{eq:eq_omegaf} is determined by profit maximization. The fundamental interest rate \eqref{eq:eq_Rf} equals the return from storage $1-\delta$.

We next consider a balanced growth path equilibrium with an asset price bubble ($P_t>0$ for all $t$). Using \eqref{eq:Wt}, we may rewrite \eqref{eq:aggK} as
\begin{subequations}
    \begin{align}
        K_{t+1}^H&=\phi\beta W_t, \label{eq:KHW}\\
        K_{t+1}^L&=(1-\phi)\beta W_t-P_t. \label{eq:KLW}
    \end{align}
\end{subequations}
Along a balanced growth path, the labor market clearing condition \eqref{eq:N} forces the aggregate wealth $W_t$ to grow at rate $G$. By the definition of balanced growth, $K_{t+1}^H,K_{t+1}^L,P_t$ must all grow at rate $G$. Then the gross return on land is $P_{t+1}/P_t=G$, so the absence of arbitrage forces the equilibrium interest rate to be $R=G$. Since $G>1>1-\delta$, by \eqref{eq:kL}, we obtain $K_{t+1}^L=0$. That is, $L$ agents buy land instead of holding idle capital with low returns. Therefore \eqref{eq:KLW} and \eqref{eq:N} imply
\begin{equation}
    P_t=(1-\phi)\beta W_t=\frac{1-\phi}{\phi}yG^{t+1}. \label{eq:PW}
\end{equation}
We thus obtain the following proposition.

\begin{prop}[Bubbly equilibrium]\label{prop:eq_b}
Suppose $1\le \lambda<1/\pi$ and let $\phi=\pi\lambda$. Then there exists a unique bubbly balanced growth path equilibrium. The equilibrium is characterized by the following equations.
\begin{subequations}\label{eq:eq_obj}
    \begin{align}
        f'(y)+1-\delta&=\frac{1-\beta+\phi\beta}{\phi\beta}G,\label{eq:eq_y}\\
        \omega&=f(y)-yf'(y), \label{eq:eq_omega}\\
        R&=G, \label{eq:eq_R}\\
        P_t&=\frac{1-\phi}{\phi}yG^{t+1}. \label{eq:eq_P}
    \end{align}
\end{subequations}
\end{prop}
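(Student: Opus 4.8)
The plan is to verify that the balanced growth path recorded in \eqref{eq:eq_obj} is the unique bubbly equilibrium, since the preceding aggregation has already assembled all the relevant relations; the proof is therefore mostly a matter of forcing each quantity in turn and then checking existence, uniqueness, and consistency with individual optimization. First I would pin down the interest rate: along a balanced growth path the detrended variables are constant, so \eqref{eq:N} forces aggregate wealth $W_t$ to grow at rate $G$, and since a bubble requires $P_t>0$ with $P_t$ tied to $W_t$ through market clearing, the land price also grows at rate $G$, giving $P_{t+1}/P_t=G$. No-arbitrage between land and the risk-free asset then forces $R=G$.

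Next I would determine the capital allocation and the asset price. Because $G>1\ge 1-\delta$, the risk-free rate strictly exceeds the storage return, so by the optimal rule \eqref{eq:kL} the $L$ agents hold no idle capital, $K_{t+1}^L=0$. Substituting into \eqref{eq:KLW} gives $P_t=(1-\phi)\beta W_t$, and eliminating $W_t$ via \eqref{eq:N} (which yields $W_t=\tfrac{y}{\phi\beta}G^{t+1}$) delivers the price formula \eqref{eq:eq_P}. Substituting $R=G$ into the growth-consistency condition \eqref{eq:GyR} and rearranging gives the capital–labor equation \eqref{eq:eq_y}, with the wage \eqref{eq:eq_omega} then fixed by the profit-maximization condition \eqref{eq:labor_foc}.

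For existence and uniqueness of $y$ I would invoke $f''<0$ (established in the text from $-yf''>0$) together with the Inada conditions, so that $f'$ is a strictly decreasing bijection of $(0,\infty)$ onto $(0,\infty)$. Since the right-hand side of \eqref{eq:eq_y} is a constant exceeding $1-\delta$ (because $\tfrac{1-\beta+\phi\beta}{\phi\beta}>1$ as $\beta<1$, and $G>1\ge 1-\delta$), the equation $f'(y)=\tfrac{1-\beta+\phi\beta}{\phi\beta}G-(1-\delta)$ has a unique positive root. Positivity $P_t>0$ is then immediate from $\phi<1$ and $y>0$.

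The step requiring the most care is confirming that this aggregate construction is genuinely consistent with individual optimization, so that the bubble is sustainable. I would check the return inequality $f'(y)+1-\delta>R=G$, which follows from $\tfrac{1-\beta+\phi\beta}{\phi\beta}>1$ and hence from \eqref{eq:eq_y} itself; this justifies maximal leverage for the $H$ agents and thus equality in \eqref{eq:KH}. Since land and bonds earn the common return $R=G$ while idle capital earns only $1-\delta<G$, the $L$ agents willingly absorb the entire land supply, consistent with land market clearing and bonds being in zero net supply, and transversality is automatic under log utility because $c_t=(1-\beta)w_t$ is optimal. Uniqueness then follows because every step---$R=G$, the root $y$, the induced $\omega$, and the level of $P_t$---is forced, leaving no free parameter along a bubbly balanced growth path.
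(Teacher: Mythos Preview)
Your argument is correct and follows exactly the route the paper takes: the paper's proof is the one-line remark that the result is ``immediate by setting $R=G$ in \eqref{eq:GyR} and using \eqref{eq:PW}'', with the supporting derivation (balanced growth forces $W_t$ and hence $P_t$ to grow at rate $G$, no-arbitrage gives $R=G$, then $R>1-\delta$ forces $K^L_{t+1}=0$) already spelled out in the text preceding the proposition. Your write-up simply makes explicit the existence/uniqueness of $y$ via the Inada conditions and the consistency checks (the strict inequality $f'(y)+1-\delta>G$ ensuring maximal leverage, and the $L$ agents' willingness to hold land), which the paper leaves implicit.
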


The proof of Proposition \ref{prop:eq_b} is immediate by setting $R=G$ in \eqref{eq:GyR} and using \eqref{eq:PW}. The left-hand side of \eqref{eq:eq_y} is the equilibrium return on capital, which pins down the capital-labor ratio $y$. The wage \eqref{eq:eq_omega} is determined by profit maximization. The bubbly equilibrium interest rate \eqref{eq:eq_R} equals the economic growth rate, as in Theorem \ref{thm:exist}. Note that in the bubbly equilibrium, the land price \eqref{eq:eq_P} grows at the same rate as the economy.

\subsection{Elimination of fundamental equilibrium}

In our model, we have $R_b=G>1-\delta=R_f$. Since the fundamental equilibrium interest rate $R_f=1-\delta$ is lower than the economic growth rate $G$, we may eliminate the fundamental equilibrium $E_f$ by dividend injection just as we did for the \citet{Samuelson1958} model in Proposition \ref{prop:determinacy_Samuelson}. 

To this end, consider a perturbed economy in which land pays a constant dividend $D>0$ every period. In this case, we need to make two changes to the derivations in \S\ref{subsec:lev_eq}. First, the capital-labor ratio $y$, wage $\omega$, and risk-free rate $R$ are no longer constant over time. Second, we need to include dividends to the aggregate wealth. With these changes, the equilibrium conditions can be summarized as follows:
\begin{subequations}
    \begin{align}
        W_t&=(f'(y_t)+1-\delta)\phi\beta W_{t-1}+P_t+D, \label{eq:eqcond_W}\\
        P_t&=(1-\phi)\beta W_t, \label{eq:eqcond_P}\\
        \frac{\phi\beta}{y_t}W_{t-1}&=G^t. \label{eq:eqcond_L}
    \end{align}
\end{subequations}
Here \eqref{eq:eqcond_W} comes from including dividend to \eqref{eq:Wt}, noting that the return on capital is $f'(y_t)+1-\delta$, and using \eqref{eq:KH}. The condition \eqref{eq:eqcond_P} is identical to \eqref{eq:PW}. The condition \eqref{eq:eqcond_L} is the labor market clearing condition, which comes from \eqref{eq:N}. Combining \eqref{eq:eqcond_W} and \eqref{eq:eqcond_P} to eliminate $P_t$ and then using \eqref{eq:eqcond_L} to eliminate $W_t$ and $W_{t-1}$, the equilibrium dynamics is fully characterized by the nonlinear difference equation involving only the capital-labor ratio
\begin{equation}
    y_{t+1}=\frac{1}{G}\frac{\phi\beta}{1-\beta+\phi\beta}\left(y_t(f'(y_t)+1-\delta)+DG^{-t}\right). \label{eq:eqcond_y}
\end{equation}

The following proposition shows that dividend injection eliminates the (inefficient) fundamental equilibrium. Furthermore, the bubbly equilibrium is locally determinate as long as the elasticity of substitution between capital and labor is not too much below $1/2$. To state this result, recall that the elasticity of substitution $\varepsilon$ is defined by
\begin{equation}
    \frac{1}{\varepsilon}=-\frac{\partial \log (F_K/F_L)}{\partial \log (K/L)}. \label{eq:ES}
\end{equation}

\begin{prop}[Local determinacy of bubbly equilibrium]\label{prop:locdet}
Suppose $1\le \lambda<1/\pi$ and $D>0$. Then the following statements are true.
\begin{enumerate}
    \item There exist no equilibrium paths converging to the fundamental balanced growth path equilibrium $E_f$.
    \item If the bubbly capital-labor ratio $y_b$ satisfies
    \begin{equation}
        y_bf''(y_b)>-2G\frac{1-\beta+\phi\beta}{\phi\beta}, \label{eq:locdet}
    \end{equation}
    for any $(k_0,D)$ sufficiently close to $(y_b,0)$, there exists a unique equilibrium path $\set{y_t}_{t=0}^\infty$ converging to $y_b$. Furthermore, \eqref{eq:locdet} is equivalent to
    \begin{equation}
        \varepsilon>\frac{1}{2}\left(1-\frac{\phi\beta}{1-\beta+\phi\beta}\frac{1-\delta}{G}\right)\frac{1}{1+y_bf'(y_b)/\omega_b}. \label{eq:locdet2}
\end{equation}
\end{enumerate}
\end{prop}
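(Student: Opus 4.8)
The plan is to recast the non-autonomous scalar recursion \eqref{eq:eqcond_y} as a two-dimensional autonomous system and then read off local determinacy from the eigenvalues at the bubbly fixed point, exactly as in Proposition \ref{prop:determinacy_Samuelson} and Theorem \ref{thm:determinacy_Tirole}. Introduce the detrended dividend $d_t\coloneqq DG^{-t}$, so that $d_{t+1}=G^{-1}d_t$ and \eqref{eq:eqcond_y} becomes $\xi_{t+1}=\Psi(\xi_t)$ with $\xi_t=(y_t,d_t)$ and
\[ \Psi(y,d)=\left(\frac{\phi\beta}{G(1-\beta+\phi\beta)}\bigl(y(f'(y)+1-\delta)+d\bigr),\ G^{-1}d\right). \]
Solving $\xi=\Psi(\xi)$ shows the only fixed points have $d=0$ and either $y=0$ or $y=y_b$, the latter being precisely \eqref{eq:eq_y}. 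Both components of $\xi_t$ are \emph{predetermined}: $d_0=D$ is exogenous, and by the labor-market condition \eqref{eq:eqcond_L} the initial ratio $y_0$ is fixed by the given initial wealth $W_{-1}$; since market clearing \eqref{eq:eqcond_P} has already absorbed the land price into $y$, \eqref{eq:eqcond_y} is a single-valued forward map and the path is unique once $(y_0,d_0)=(k_0,D)$ is fixed. Local determinacy therefore reduces to local asymptotic stability of $\xi_b^*\coloneqq(y_b,0)$.

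For part (i) I would not invoke \eqref{eq:eqcond_y} (which presumes the bubbly structure $K^L=0$) but argue directly as in the remark following Proposition \ref{prop:determinacy_Samuelson}. Along any equilibrium converging to $E_f$, the idle capital held by $L$-agents pins the risk-free rate to its storage value, so by \eqref{eq:R_bound} we have $R_t\to R_f=1-\delta$. Because the injected dividend is constant ($G_d=1$) and $R_f=1-\delta<1=G_d$, the discount factors $1/\prod_u R_u$ grow geometrically, so the fundamental value $V_t=\sum_{s>t}D/\prod_{u=t}^{s-1}R_u$ diverges. As the land price satisfies $P_t\ge V_t$ and is bounded by aggregate wealth, this is impossible, ruling out such paths.

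For part (ii) the core computation is the Jacobian of $\Psi$ at $\xi_b^*$. Since $\Psi$ is triangular in $d$, the Jacobian is upper triangular with diagonal entries $1/G$ and $\mu\coloneqq 1+\frac{\phi\beta}{G(1-\beta+\phi\beta)}y_bf''(y_b)$, where I use \eqref{eq:eq_y} to eliminate $f'(y_b)+1-\delta$. One eigenvalue is $1/G<1$ and, since $f''<0$, we always have $\mu<1$; the stability condition $|\mu|<1$ is thus equivalent to $\mu>-1$, which rearranges exactly to \eqref{eq:locdet}. Under \eqref{eq:locdet} both eigenvalues lie strictly inside the unit disk, so $\xi_b^*$ is locally asymptotically stable and every forward orbit from $(k_0,D)$ near $(y_b,0)$ converges to $y_b$; I would then verify that along such an orbit consumption and capital remain positive and $R_t$ stays above $1-\delta$ (so that $K^L=0$ and \eqref{eq:eqcond_P} remain valid), which holds by continuity near the bubbly steady state, confirming the orbit is a genuine equilibrium.

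Finally, to show \eqref{eq:locdet} is equivalent to \eqref{eq:locdet2}, I would translate $y_bf''(y_b)$ into the elasticity of substitution via \eqref{eq:ES}. For $F(K,L)=Lf(K/L)$ one has $F_K/F_L=f'/(f-yf')$, and differentiating $\log(F_K/F_L)$ with respect to $\log y$ gives $\frac{1}{\varepsilon}=-\frac{yf''\,(\omega+yf')}{f'\omega}$, hence $y_bf''(y_b)=-f'\omega/[\varepsilon(\omega+yf')]$. Substituting this into \eqref{eq:locdet} and using $f'(y_b)+1-\delta=G(1-\beta+\phi\beta)/(\phi\beta)$ to simplify the coefficient yields \eqref{eq:locdet2}. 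The main obstacle is not any single calculation but the conceptual step in part (ii): recognizing that under log utility the constant saving rate lets market clearing collapse the asset-price jump variable, so that \eqref{eq:eqcond_y} is a predetermined forward recursion and determinacy coincides with local stability rather than a saddle-path eigenvalue count—this is what makes the single condition $\mu>-1$ the relevant sufficient condition, and what must be justified carefully so that the convergent orbit is shown to be an admissible equilibrium.
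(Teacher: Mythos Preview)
Your proposal is correct and follows essentially the same route as the paper: recast \eqref{eq:eqcond_y} as a two-dimensional autonomous map in $(y_t,\text{detrended dividend})$, compute the triangular Jacobian at $(y_b,0)$ to get eigenvalues $1/G$ and $\mu=1+\frac{\phi\beta}{G(1-\beta+\phi\beta)}y_bf''(y_b)$, observe that both state variables are predetermined so local determinacy is local stability, and then translate $\mu>-1$ into the elasticity form via $yf''=-f'\omega/[\varepsilon(\omega+yf')]$ together with \eqref{eq:eq_y}. The paper's auxiliary variable differs from your $d_t$ only by the constant factor $\frac{\phi\beta}{G(1-\beta+\phi\beta)}$, which is immaterial, and your part (i) argument via $R_t\to 1-\delta<1=G_d$ is exactly the intended one.
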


When $y_b$ satisfies \eqref{eq:locdet} (or equivalently \eqref{eq:locdet2}), the bubbly steady state $(y_b,0)$ is stable, and because the initial condition $(k_0,D)$ is exogenous, there exists a unique equilibrium path converging to the steady state. When the reverse inequality to \eqref{eq:locdet} holds, the steady state is unstable and the equilibrium path does not converge unless $k_0$ (or $D$) happens to take a particular value. In this case there is no basis for focusing on the steady state. Proposition \ref{prop:locdet} tells us that focusing on the bubbly steady state is justified if the elasticity of substitution between capital and labor is not too much below $1/2$. In particular, local determinacy holds if the production function is Cobb-Douglas ($\varepsilon=1$).

\section{Further applications}\label{sec:application}

In this section, we provide further applications of equilibrium selection by dividend injection and discuss the related literature.

\subsection{Stochastic bubbles \texorpdfstring{\citep{Weil1987}}{}}\label{subsec:application_stochastic}

So far, we have exclusively focused on perfect foresight (deterministic) equilibria but we can apply similar arguments to stochastic equilibria. As an illustrative example, consider stochastic bubbles, which were introduced by \citet{Blanchard1979} in a reduced-form model and by \citet{Weil1987} in a general equilibrium model. This model is similar to those of \citet{Samuelson1958} and \citet{Tirole1985} except that agents rationally expect the bubble to pop (permanently become worthless) with some probability each period. Since this model reduces to the perfect foresight economy once the bubble asset becomes worthless, under the low interest condition we can eliminate this equilibrium by dividend injection. Consequently, stochastic bubbles due to sunspots can be ruled out. This argument casts doubt on the plausibility of the stochastic bubble model, which is widely applied; see, for instance, \citet[\S4]{CaballeroKrishnamurthy2006}, \citet[\S4.2]{FarhiTirole2012}, \citet[\S II.C]{MartinVentura2012}, \citet{AokiNikolov2015}, \citet*{HiranoInabaYanagawa2015}, \citet{Clain-Chamosset-YvrardKamihigashi2017}, \citet*[\S III]{BiswasHansonPhan2020}, and \citet*{Guerron-QuintanaHiranoJinnai2023}, among others.

\subsection{Public debt as private liquidity \texorpdfstring{\citep{Woodford1990}}{}}\label{subsec:application_Woodford}

\citet{Woodford1990} develops a model with infinitely lived agents in which agents have investment opportunities every other period, private borrowing is impossible, and the only means of saving is government bonds backed by taxes. In this setting, he showed that government bonds may crowd  in capital investments, instead of crowding them out. This model is a variant of \citet{Bewley1980}'s model with growth and production. It has subsequently been applied to pure bubble models by replacing government bonds with pure bubble assets, including \citet{Kocherlakota2009}, \citet{HiranoYanagawa2017}, \citet{KiyotakiMoore2019}, and \citet*{BrunnermeierMerkelSannikov2024}. Since the mathematical argument for equilibrium selection is similar to that in \S\ref{subsec:Kocherlakota} and \citet{Woodford1990}'s mechanism is explained in \citet[\S4.2]{HiranoToda2024JME}, we omit the details.

\subsection{Bubble as store of value under idiosyncratic risk}\label{subsec:application_storage}

There are several economic models in which a pure bubble asset circulates in the presence of uninsurable idiosyncratic risk. \citet{Kitagawa1994,Kitagawa2001} showed in a two-period OLG economy that fiat money can circulate as a safe asset when agents are subject to idiosyncratic storage risk. \citet{AokiNakajimaNikolov2014} extend this model to an infinite-horizon setting with general shocks. \citet*{BrunnermeierMerkelSannikov2024} also model fiat money as self-insurance against idiosyncratic investment risk.

We briefly describe the model of \citet{AokiNakajimaNikolov2014}. There are a continuum of agents with logarithmic utility \eqref{eq:log_utility}. Investing $k_t$ units of good at time $t$ yields an output of $y_{t+1}=z_{t+1}k_t$ at time $t+1$, where the productivity $z_t>0$ is \iid across agents and time. The agents can also trade a risk-free asset in zero net supply. Let $w_t>0$ be the wealth of a typical agent at the beginning of time $t$. Given the risk-free rate $R$, due to logarithmic utility and \iid shocks, the optimal consumption rule is $c_t=(1-\beta)w_t$ and the optimal portfolio problem reduces to
\begin{equation}
    \max_{\eta\ge 0} \E[\log(z\eta+R(1-\eta))], \label{eq:opt_portfolio}
\end{equation}
where $\eta\ge 0$ is the fraction of wealth invested in the technology. Using these rules and aggregating individual wealth, we obtain the system of equations
\begin{equation*}
    G(R)=\beta \E[z\eta(R)+R(1-\eta(R))] \quad \text{and} \quad s(R)=1-\eta(R)
\end{equation*}
in \eqref{eq:system}, where $\eta(R)\ge 0$ is the (unique) maximizer of \eqref{eq:opt_portfolio}.

This model is a special case of \citet{Toda2014JET} with log utility, one technology, and \iid shocks. Because agents make the same portfolio choice due to homotheticity, a risk-free asset in zero net supply is useless for risk sharing \citep[Proposition 10]{Toda2014JET}. However, as \citet{AokiNakajimaNikolov2014} show, the introduction of a pure bubble asset may improve risk sharing and hence welfare because it is in positive net supply.

The fundamental equilibrium condition $s(R)=0$ implies $\eta(R)=1$. Taking the first-order condition of \eqref{eq:opt_portfolio} at $\eta=1$, the fundamental equilibrium interest rate satisfies
\begin{equation*}
    0=\E\left[\frac{z-R}{z}\right]\iff R_f=1/\E[1/z].
\end{equation*}
The low interest condition is therefore
\begin{equation}
    G(R_f)>R_f\iff \beta \E[z]>1/\E[1/z]\iff \beta \E[z]\E[1/z]>1, \label{eq:bubble_Aoki}
\end{equation}
which is exactly the bubble existence condition in \citet{AokiNakajimaNikolov2014}.\footnote{The productivity $z$ in our notation corresponds to $A\theta$ in \citet{AokiNakajimaNikolov2014}, where $A>0$ is a constant and they normalize $\E[\theta]=1$. Therefore \eqref{eq:bubble_Aoki} is equivalent to $\E[\beta/\theta]=1$, which is their Equation (41).} Because $\eta=0$ whenever $R>\E[z]$ due to concavity, it follows that $G(R)=\beta R<R$ if $R>\E[z]$. Thus all assumptions of Theorem \ref{thm:exist} are satisfied, and there exists a bubbly equilibrium with $G(R_b)=R_b$, which is condition (39) in \citet{AokiNakajimaNikolov2014}.

\section{Conclusion}

If a model does not make determinate predictions, its value as a tool for economic analysis is severely limited. Although the equilibrium indeterminacy in pure bubble models has been recognized for decades, the literature has focused on only one of a continuum of bubbly equilibria corresponding to a saddle path or a steady state, with no scientific basis. We have challenged this long-standing unsolved problem in the literature and proposed a solution. Our proposed idea of dividend injection not only places the existing models of pure bubbles on firmer footing but also makes applications to policy and quantitative analysis possible in a robust way. Furthermore, our idea escapes the criticism that it is unrealistic to use pure bubble assets without dividends to describe real world bubbles attached to land, housing, and stocks.

\appendix

\section{Proofs}\label{sec:proof}

\subsection{Proof of Proposition \ref{prop:Pareto_rank}}

The initial old's consumption $c_0^o=b+P_0$ is strictly increasing in $P_0$. Let us show that the equilibrium utility of generation $t$ is strictly increasing in $P_0$, which implies that $E(P_0')$ strictly Pareto dominates $E(P_0)$ if $P_0<P_0'$. Using $p_t=G^{-t}P_t$ and \eqref{eq:xt_diff}, the equilibrium utility of generation $t$ is
\begin{align*}
    &\log(aG^t-P_t)+\beta \log(bG^{t+1}+P_{t+1})\\
    &=t\log G+\log(a-p_t)+\beta(t+1)\log G+\beta\log\left(b+\frac{bp_t}{\beta a-(1+\beta)p_t}\right)\\
    &=(1+\beta)\log(a-p_t)-\beta\log(\beta a-(1+\beta)p_t)+\text{constant},
\end{align*}
where ``constant'' is a term independent of $p_t$. Let
\begin{equation*}
    f(p)=(1+\beta)\log(a-p)-\beta\log(\beta a-(1+\beta)p).
\end{equation*}
Then $0<p<\frac{\beta}{1+\beta}a$, and we have
\begin{equation*}
    f'(p)=-\frac{1+\beta}{a-p}+\frac{\beta(1+\beta)}{\beta a-(1+\beta)p}=\frac{(1+\beta)p}{(a-p)(\beta a-(1+\beta)p)}>0.
\end{equation*}
Therefore the utility of generation $t$ is strictly increasing in $p_t$. Since $p_t$ is strictly increasing in $p_0=P_0$ by \eqref{eq:pt}, it follows that the utility of generation $t$ is strictly increasing in $P_0$.

In the bubbly equilibrium $E\left(\frac{\beta a-b}{1+\beta}\right)$, we have $R=G$. Its Pareto efficiency follows from detrending the economy and applying Proposition 5.6 of \citet{BalaskoShell1980}. \hfill \qed

\subsection{Proof of Proposition \ref{prop:determinacy_Samuelson}}

\ref{item:determinacy_Samuelson1} The proof follows from an application of the Bubble Necessity Theorem of \citet[Theorem 2]{HiranoToda2025JPE}. See also their Example 2.

\ref{item:determinacy_Samuelson2} Using the definition of $h$ in \eqref{eq:h} and $\xi_b^*$ in \eqref{eq:steady_states}, we obtain the Jacobian
\begin{equation*}
    Dh(\xi_b^*)=\begin{bmatrix}
        \beta a/b & -G_d/G\\
        0 & G_d/G
    \end{bmatrix}.
\end{equation*}
Since by assumption $\beta a>b$ and $G_d<G$, the eigenvalues of $Dh(\xi_b^*)$ are $\beta a/b>1$ and $G_d/G\in (0,1)$. Therefore $\xi_b^*$ is a hyperbolic fixed point and the local stable manifold theorem \citep[Theorem 8.9]{TodaEME} implies that for any sufficiently small $\xi_{2,0}>0$ (hence sufficiently small $D_0>0$), there exists a unique orbit $\set{\xi_t}_{t=0}^\infty$ converging to $\xi_b^*$. Since the difference equation \eqref{eq:xt_autonomous} is autonomous (does not explicitly depend on time), the same argument applies if we start the economy from any $t=T$. Therefore for sufficiently small $\xi_{2T}>0$ (hence sufficiently small $D_0(G_d/G)^T$), there exists a unique orbit $\set{\xi_t}_{t=T}^\infty$ converging to $\xi_b^*$. Since $G_d<G$, it follows that for large enough $T>0$, there exists a unique orbit $\set{\xi_t}_{t=T}^\infty$ converging to $\xi_b^*$. But clearly \eqref{eq:xt_autonomous} can be solved backwards because \eqref{eq:xt_autonomous2} is linear and \eqref{eq:xt_autonomous1} can be solved for $\xi_t$ as
\begin{equation*}
    \xi_{1t}=\beta a\left(\frac{b}{\xi_{1,t+1}+\frac{G_d}{G}\xi_{2t}}+1+\beta\right)^{-1}>0.
\end{equation*}
Therefore by backward induction, there exists a unique orbit $\set{\xi_t}_{t=0}^\infty$ converging to $\xi_b^*$ and the equilibrium is locally determinate. \hfill \qed

\subsection{Proof of Theorem \ref{thm:determinacy_Tirole}}

The nonexistence of an equilibrium path converging to $\xi_f^*$ follows from the same argument as in the proof of Proposition \ref{prop:determinacy_Samuelson}. To prove the local determinacy of $\xi_b^*$, we apply the implicit function theorem and the local stable manifold theorem. We divide the rest of the proof in several steps.

\begin{step}
Calculation of the Jacobian.
\end{step}

Using the definition of $H$ in \eqref{eq:H_Tirole} and $G=f'(k_b)$, we obtain
\begin{align*}
    D_xH(\xi_b^*)&=\begin{bmatrix}
        0 & -1 & 0\\
        c_1^y-\omega' & 1 & 0\\
        0 & 0 & -G_d/G
    \end{bmatrix}, &
    D_yH(\xi_b^*)&=\begin{bmatrix}
        -\frac{k_b}{G}f''(k_b) & 1 & 1\\
        G+c_2^y & 0 & 0\\
        0 & 0 & 1
    \end{bmatrix}.
\end{align*}
Suppose for the moment that
\begin{equation}
    G+c_2^y=G+\partial c^y/\partial k'>0. \label{eq:suff_cond}
\end{equation}
Since $D_yH(\xi_b^*)$ is block upper triangular, we have $\det(D_yH(\xi_b^*))=-G-c_2^y<0$ by \eqref{eq:suff_cond}. Therefore applying the implicit function theorem, we can solve $H(\xi,\eta)=0$ locally as $\eta=h(\xi)$. To simplify notation, let $p=\omega'-c_1^y$, $q=G_d/G$, $r=-\frac{k_b}{G}f''(k_b)$, and $s=G+c_2^y$. Then $s>0$ by \eqref{eq:suff_cond} and
\begin{equation*}
    Dh(\xi_b^*)=-[D_yH(\xi_b^*)]^{-1}D_xH(\xi_b^*)=\begin{bmatrix}
        p/s & -1/s & 0\\
        -pr/s & 1+r/s & -q\\
        0 & 0 & q
    \end{bmatrix}.\footnote{We have calculated this matrix using \textsc{Mathematica} to avoid human errors.}
\end{equation*}

\begin{step}
If \eqref{eq:suff_cond} holds, then $p>0$, $q\in (0,1)$, and $s>0$.
\end{step}
Since $R_f<G_d<G$, we have $q\in (0,1)$. Since $f''<0$, we have $r>0$. Let us show $p>0$. Noting that $\omega(k)=f(k)-kf'(k)$, we have $\omega'(k)=-kf''(k)>0$. Therefore it suffices to show $c_1^y<0$. Substituting the budget constraint into the utility function, the utility maximization reduces to
\begin{equation*}
    \max_{c^y}U(c^y,f'(k')(\omega(k)-c^y)).
\end{equation*}
The first-order condition is
\begin{equation}
    U_1-f'(k')U_2=0\iff M(c^y,f'(k')(\omega(k)-c^y))-f'(k')=0,\label{eq:foc_Tirole}
\end{equation}
where $M\coloneqq U_1/U_2$ is the marginal rate of substitution. Applying the implicit function theorem, we obtain
\begin{subequations}
\begin{align}
    \frac{\partial c^y}{\partial k}&=-\frac{M_1f'(k')\omega'(k)}{M_1-f'(k')M_2}, \label{eq:c1y}\\
    \frac{\partial c^y}{\partial k'}&=-\frac{(M_2(\omega(k)-c^y)-1)f''(k')}{M_1-f'(k')M_2}. \label{eq:c2y}
\end{align}
\end{subequations}
The quasi-concavity of $U$ implies $M_1<0$ and $M_2>0$. Since $f'>0$ and $\omega'>0$, we obtain $c_1^y=\partial c^y/\partial k<0$.

\begin{step}
If \eqref{eq:suff_cond} holds, then $Dh(\xi_b^*)$ has one eigenvalue in $(1,\infty)$ and two eigenvalues in $(0,1)$ and the bubbly equilibrium is locally determinate.
\end{step}

Since $Dh(\xi_b^*)$ is block upper triangular, one eigenvalue is $q\in (0,1)$. The other two eigenvalues are those of the first $2\times 2$ block. Its characteristic function is
\begin{align*}
    \Phi(z)&=\det\begin{bmatrix}
        z-p/s & 1/s \\ pr/s & z-1-r/s
    \end{bmatrix}\\
    &=(z-p/s)(z-1-r/s)-pr/s^2.
\end{align*}
Then $\Phi(0)=p/s>0$ and $\Phi(1)=-r/s<0$, so one eigenvalue is in $(0,1)$ and the other is in $(1,\infty)$.

Since $\xi_0=(\xi_{10},\xi_{20},\xi_{30})=(k_0,P_0/N_0,D_0/N_0)$, the number of predetermined variables ($\xi_{10}$ and $\xi_{30}$, so two) equals the number of eigenvalues of $Dh(\xi_b^*)$ less than 1 in absolute value. Therefore by the local stable manifold theorem, $\xi_b^*$ is locally determinate. (See the textbook treatment of \citet[\S8.7]{TodaEME} for more details on this argument.)

\begin{step}
\eqref{eq:suff_cond} and \eqref{eq:EIS_cond} are equivalent.
\end{step}

To complete the proof of Theorem \ref{thm:determinacy_Tirole}, it remains to show the equivalence of \eqref{eq:EIS_cond} and \eqref{eq:suff_cond}. Let $\sigma=1/\varepsilon$ and $M=U_1/U_2$. Setting $t=c_1/c_2$, it follows from the definition of the elasticity of intertemporal substitution \eqref{eq:EIS} that
\begin{equation}
    \sigma=-t\frac{\partial \log M(tc_2,c_2)}{\partial t}=-tc_2\frac{M_1}{M}=-\frac{c^y}{f'}M_1, \label{eq:sigma1}
\end{equation}
where we have used \eqref{eq:foc_Tirole}. Similarly,
\begin{equation}
    \sigma=-t\frac{\partial \log M(c_1,c_1/t)}{\partial t}=\frac{c_1}{t}\frac{M_2}{M}=(\omega-c^y)M_2. \label{eq:sigma2}
\end{equation}
Therefore using \eqref{eq:sigma1} and \eqref{eq:sigma2} to eliminate $M_1,M_2$ from \eqref{eq:c2y} and the bubbly steady state condition $G=f'(k_b)$, we obtain
\begin{align*}
    \eqref{eq:suff_cond}&\iff G-\frac{(\sigma-1)f''}{-\sigma \frac{f'}{c^y}-f'\frac{\sigma}{\omega-c^y}}>0\\
    &\iff \sigma \frac{G^2}{f''}\left(\frac{1}{c^y}+\frac{1}{\omega-c^y}\right)+\sigma-1<0\\
    &\iff \varepsilon=\frac{1}{\sigma}>1+\frac{G^2}{f''}\frac{\omega}{c^y(\omega-c^y)}\iff \eqref{eq:EIS_cond}. \hfill \qed
\end{align*}

\begin{proof}[Proof of Proposition \ref{prop:locdet}]
The nonexistence of an equilibrium converging to the fundamental balanced growth path follows by the same argument as in Proposition \ref{prop:determinacy_Samuelson}.

To prove the local determinacy of the bubbly steady state, define the auxiliary variables
\begin{equation*}
    \xi_t=(\xi_{1t},\xi_{2t})=\left(y_t,\frac{\phi\beta}{1-\beta+\phi\beta}DG^{-t-1}\right).
\end{equation*}
Then \eqref{eq:eqcond_y} can be rewritten as $\xi_{t+1}=h(\xi_t)$, where $h:\R_{++}^2\to \R^2$ is defined by
\begin{align*}
    h_1(\xi_1,\xi_2)&=\frac{1}{G}\frac{\phi\beta}{1-\beta+\phi\beta}\xi_1(f'(\xi_1)+1-\delta)+\xi_2,\\
    h_2(\xi_1,\xi_2)&=\frac{1}{G}\xi_2.
\end{align*}
Let $\xi^*=(y_b,0)$ be the bubbly steady state, where $y$ satisfies \eqref{eq:eq_y}. Using the steady state condition \eqref{eq:eq_y}, a straightforward calculation yields the Jacobian
\begin{align*}
    Dh(\xi^*)&=\begin{bmatrix}
        \frac{1}{G}\frac{\phi\beta}{1-\beta+\phi\beta}(f'(y_b)+1-\delta+y_bf''(y_b)) & 1\\
        0 & 1/G
    \end{bmatrix}\\
    &=\begin{bmatrix}
        1+\frac{1}{G}\frac{\phi\beta}{1-\beta+\phi\beta}y_bf''(y_b) & 1\\
        0 & 1/G
    \end{bmatrix}.
\end{align*}
Therefore the eigenvalues of $Dh(\xi^*)$ are
\begin{equation}
    \lambda_1=1+\frac{1}{G}\frac{\phi\beta}{1-\beta+\phi\beta}y_bf''(y_b) \label{eq:lambda1}
\end{equation}
and $\lambda_2=1/G\in (0,1)$. Since the initial condition $(k_0,D)$ is exogenous, the steady state is locally determinate if $\abs{\lambda_1}<1$. Since $f''<0$, we have $\lambda_1<1$ by \eqref{eq:lambda1}. Therefore it suffices to check $\lambda_1>-1$, which is equivalent to \eqref{eq:locdet}.

Finally, let us show the equivalence of \eqref{eq:locdet} and \eqref{eq:locdet2}. Setting $y=K/L$ and using $F_K=f'(y)$ and $F_L=f(y)-yf'(y)$ in \eqref{eq:ES}, a straightforward calculation yields
\begin{equation*}
    \frac{1}{\varepsilon}=-y\frac{ff''}{f'(f-yf')}.
\end{equation*}
Therefore
\begin{equation*}
    yf''(y)=-\frac{1}{\varepsilon}\frac{f'\omega}{yf'+\omega}=-\frac{1}{\varepsilon}\frac{f'}{1+yf'/\omega},
\end{equation*}
where we have used $\omega=f-yf'$. Substituting this expression into \eqref{eq:locdet} and using the steady state condition \eqref{eq:eq_y}, we obtain \eqref{eq:locdet2}.
\end{proof}

\printbibliography

\end{document}